\documentclass{article}

\usepackage{microtype}
\usepackage{graphicx}
\usepackage{subcaption}
\usepackage{booktabs} 

\usepackage{hyperref}

\usepackage[accepted]{icml2026}

\usepackage{amsmath}
\usepackage{amssymb}
\usepackage{mathtools}
\usepackage{amsthm}

\usepackage[capitalize,noabbrev]{cleveref}

\theoremstyle{plain}
\newtheorem{theorem}{Theorem}[section]

\newtheorem{lemma}[theorem]{Lemma}
\newtheorem{corollary}[theorem]{Corollary}
\theoremstyle{definition}
\newtheorem{definition}[theorem]{Definition}
\newtheorem{assumption}[theorem]{Assumption}
\newtheorem{example}[theorem]{Example}
\theoremstyle{remark}
\newtheorem{remark}[theorem]{Remark}

\newcommand{\R}{\mathbb{R}}
\newcommand{\E}{\mathbb{E}}

\newcommand{\cA}{\mathcal{A}}

\newcommand{\cM}{\mathcal{M}}
\newcommand{\cU}{\mathcal{U}}
\newcommand{\cP}{\mathcal{P}}
\newcommand{\pctr}{\text{pCTR}}
\newcommand{\pcvr}{\text{pCVR}}
\newcommand{\tcpa}{\text{tCPA}}
\newcommand{\ecm}{\text{ECM}}

\icmltitlerunning{Model Monotonicity in Autobidding Auctions}

\begin{document}

\twocolumn[
  \icmltitle{Model Monotonicity in Autobidding Auctions: \texorpdfstring{\\}{ } When Do Better Predictions Lead to Better Outcomes?}

  \begin{icmlauthorlist}
    \icmlauthor{Ashwinkumar Badanidiyuru}{uber}
  \end{icmlauthorlist}

  \icmlaffiliation{uber}{Uber Technologies, Inc., San Francisco, CA, USA}
  \icmlcorrespondingauthor{Ashwinkumar Badanidiyuru}{ashwinkumarbv@uber.com}

  \icmlkeywords{Auctions, Autobidding, Machine Learning, Recommender Systems, Online Advertising}

  \vskip 0.3in
]

\printAffiliationsAndNotice{}

\begin{abstract}
Online advertising platforms rely on machine learning models to predict click-through rates (pCTR) and conversion rates (pCVR) for auction mechanisms. We introduce a novel framework to study the interaction between recommender system model quality, auction format, and autobidder behavior. We formalize when model improvements---defined via a refinement relation inspired by filtrations in probability theory---lead to improvements in platform-level Evaluation Criteria Metrics (ECM) such as revenue, welfare, or liquid welfare. Our main contributions are: (1) a formal definition of model improvement based on cluster refinement, and (2) a systematic characterization of ECM monotonicity across different combinations of bidder types (tCPA, max-CPA), auction formats (first-price, second-price, VCG), and budget constraints. We show that first-price auctions with uniform bidding guarantee revenue monotonicity for tCPA bidders without budgets (via Jensen's inequality), while second-price auctions and budget constraints can break this property. We provide full numerical constructions for the non-monotonicity results. Our findings have practical implications for advertising platforms seeking to align model improvements with business outcomes.
\end{abstract}


\section{Introduction}
\label{sec:introduction}

Online advertising platforms generate hundreds of billions of dollars in annual revenue by matching advertisers with users through sophisticated auction mechanisms. At the heart of these systems lies a critical interplay between three components: (1) machine learning models that predict user behavior, such as click-through rate (\pctr{}) and conversion rate (\pcvr{}) models; (2) auction mechanisms that allocate ad impressions and determine prices; and (3) autobidding systems that bid on behalf of advertisers according to their specified objectives. Understanding how improvements in one component propagate through this system to affect platform-level outcomes is essential for both theoretical understanding and practical system design.

Advertising platforms continually invest in improving their prediction models, with the natural expectation that better predictions lead to better outcomes. However, the relationship between model quality and platform metrics is far from straightforward. When a platform deploys a more accurate \pctr{} model, the change affects not only the auction mechanism's allocation decisions but also the behavior of autobidders, which adapt their bidding strategies to the new predictions. This creates a complex feedback loop where the ultimate effect on platform metrics---such as revenue, advertiser welfare, or efficiency---depends on the intricate interaction between all three components.

\subsection{Motivation and Central Question}

Consider a platform that improves its conversion prediction model, enabling finer-grained distinctions between users. Intuitively, one might expect that more accurate predictions should lead to more efficient allocations and higher revenue. Yet this intuition can fail. A refined model may enable some autobidders to target only the most valuable impressions more precisely, potentially reducing the competitive pressure that drives revenue. Alternatively, budget-constrained advertisers might exhaust their budgets more quickly on high-value users, leaving lower-value users unserved. These scenarios illustrate that the relationship between model quality and platform outcomes is subtle and depends critically on the auction format, bidder objectives, and resource constraints.

This motivates our central research question: \emph{When do improvements in prediction models lead to improvements in platform-level Evaluation Criteria Metrics (ECM)?} We refer to this property as \textbf{ECM monotonicity} or simply \textbf{model monotonicity}. Understanding when this property holds---and when it fails---has significant implications for platform design, model deployment decisions, and the alignment of technical improvements with business objectives.

\subsection{Our Contributions}

We make two main contributions in this paper:

\paragraph{Novel Problem Formulation.} We introduce a formal framework for studying the interaction between model quality, auction mechanisms, and autobidder behavior. Central to our approach is a definition of \emph{model refinement} inspired by filtrations in probability theory. We model prediction systems as partitioning a universe $\cU$ of users into clusters, where the model predicts the average conversion probability for each cluster. A model $\cM_A$ \emph{refines} a model $\cM_B$ if each cluster of $\cM_A$ is contained in a cluster of $\cM_B$---that is, $\cM_A$ makes finer distinctions between users. This definition captures the intuitive notion that $\cM_A$ provides ``more information'' than $\cM_B$, while being mathematically tractable for analysis.

Within this framework, we consider two types of autobidders that reflect common industry practice:
\begin{itemize}
    \item \textbf{Target CPA (tCPA) bidders}: Advertisers specify a target cost-per-acquisition and seek to maximize conversion volume subject to maintaining an average CPA at or below their target.
    \item \textbf{MAX-CPA bidders}: Advertisers specify a maximum willingness-to-pay per conversion and seek to maximize total value, where each conversion's value equals the advertiser's maximum CPA.
\end{itemize}
Following standard autobidding practice~\citep{aggarwal2024autobidding}, we assume tCPA targets equal true per-conversion values ($v_i = t_i$), ensuring welfare equals revenue for tCPA bidders without budgets under FPA with multiplier $\mu_i = 1$.
We study these bidder types across multiple auction formats (first-price, VCG---which is equivalent to second-price for single-item auctions) and under both unconstrained and budget-constrained settings.

\paragraph{Systematic Characterization of ECM Monotonicity.} We provide a comprehensive analysis of when model refinement leads to improvements in platform-level metrics, with our systematic characterization in \Cref{tab:monotonicity}. We find that monotonicity is \emph{rare}---only three settings guarantee it: (1) tCPA bidders without budgets under FPA (via Jensen's inequality), (2) MAX-CPA bidders without budgets under VCG (welfare only), and (3) LP-based fractional allocation with budgets (welfare monotonic; for tCPA, surrogate revenue also monotonic; not incentive compatible). In all remaining settings studied---including VCG for tCPA bidders and FPA with budget constraints---monotonicity fails. We provide explicit numerical constructions for these non-monotonicity results.

\paragraph{Key Insights.} Our results yield several insights: (1) Monotonicity is rare---only three settings guarantee it, so platforms should not assume model improvements translate to metric improvements. (2) FPA provides the only revenue monotonicity guarantee (for tCPA without budgets), while VCG provides the only welfare guarantee (for MAX-CPA without budgets). (3) A fundamental trade-off exists between incentive compatibility and monotonicity with budget constraints. (4) Model improvements should be validated through simulation or A/B testing.

\paragraph{Conflict of Interest Disclosure.} The author is employed by Uber, which operates advertising systems that use auction mechanisms.


\section{Related Work}
\label{sec:related}

Our work studies model monotonicity in autobidding auctions, connecting several research areas. We briefly survey the most relevant literature, referring readers to standard references for foundational concepts.

\paragraph{Autobidding.} The theoretical study of autobidding was initiated by \citet{aggarwal2019autobidding}, who analyzed constrained autobidding and established foundational equilibrium concepts. \citet{aggarwal2024autobidding} provide a comprehensive survey covering equilibrium analysis, mechanism design, and learning in autobidding systems. Recent work extends to multi-channel settings \citep{deng2023multichannel} and dynamics with budget and ROI constraints \citep{lucier2024autobidders}. These works analyze autobidder behavior given fixed prediction models; our work studies how changes in model quality affect equilibrium outcomes.

\paragraph{First-Price Auctions and Pacing.} The shift to first-price auctions has motivated substantial work on pacing equilibria \citep{conitzer2022pacing} and efficiency in non-truthful settings \citep{liaw2023efficiency,liaw2024efficiency}. \citet{deng2024nonuniform} empirically study bid-scaling strategies across auction formats. Our positive result for tCPA bidders under FPA connects to this literature, while our negative results highlight limitations of first-price mechanisms.

\paragraph{VCG and Welfare.} The VCG mechanism \citep{vickrey1961counterspeculation,clarke1971multipart,groves1973incentives} achieves welfare-optimal allocations with dominant-strategy incentive compatibility. Our welfare monotonicity result for MAX-CPA bidders under VCG builds on these classical efficiency properties. \citet{mehta2022auction} shows that randomization can improve efficiency beyond VCG in autobidding contexts.

\paragraph{Value Maximizers and Liquid Welfare.} Max-CPA bidders are a form of value maximizers \citep{wilkens2016value}. For budget-constrained settings, liquid welfare \citep{dobzinski2014liquidwelfare} provides an appropriate efficiency benchmark; \citet{colini2026liquid} study optimal liquid welfare guarantees for autobidding agents with budgets. Our negative results for budget-constrained settings show that even liquid welfare fails to be monotonic.

\paragraph{Prediction Models in Advertising.} \citet{vasile2017cost} observe that prediction accuracy may not align with business objectives due to downstream auction effects---a theme central to our work. While prior work focuses on learning better predictions or bidding strategies \citep{ma2018entire,cai2017real}, we study when improving predictions actually improves platform outcomes.

\paragraph{Positioning.} To our knowledge, this is the first work to formally study when prediction model improvements guarantee platform metric improvements in autobidding settings. Our key finding---that monotonicity holds only in a small set of specific cases---provides theoretical foundations for understanding the complex dynamics of model deployment in advertising platforms.


\section{Model Definition and Problem Setting}
\label{sec:model}

We formalize the problem setting for studying the interaction between prediction model quality and platform-level outcomes in autobidding auctions. Our framework draws on concepts from probability theory, specifically filtrations and refinements~\citep{hardy1952inequalities,durrett2019probability}.

\subsection{Problem Setting}
\label{subsec:problem_setting}

We consider an online advertising platform that hosts auctions to display advertisements to users. Let $\cU$ denote the \emph{user universe}, a finite set of users. Each user $u \in \cU$ and advertiser $i \in \cA$ are associated with true (unobservable) behavioral parameters $\mathrm{ctr}_i(u), \mathrm{cvr}_i(u) \in [0,1]$ representing click-through and conversion rates for that advertiser--user pair.

A set of advertisers $\cA = \{1, \ldots, n\}$ compete for ad slots, each with private value $v_i \in \R_{\geq 0}$ per conversion and constraints such as target CPA ($\tcpa$) or maximum CPA. Advertisers employ \emph{autobidding} systems~\citep{aggarwal2019autobidding,aggarwal2024autobidding} that automatically determine bids based on the platform's predicted values $\pctr(u)$ and $\pcvr(u)$.

\begin{assumption}[Value Equals Target for tCPA Bidders]
\label{assumption:value_equals_target}
For target CPA (tCPA) bidders, we assume the per-conversion value $v_i$ equals the target $t_i$. This ensures that, without budget constraints and under first-price auctions with multiplier $\mu_i = 1$, welfare equals revenue for tCPA bidders, since the expected value $v_i \cdot p_{i,C}$ equals the target payment $t_i \cdot p_{i,C}$.
\end{assumption}

\subsection{Prediction Models}
\label{subsec:prediction_models}

We formalize prediction models using a partition-based framework that captures prediction granularity.

\begin{definition}[User Partition]
\label{def:partition}
A \emph{partition} of $\cU$ is a collection $\cP = \{C_1, \ldots, C_k\}$ of non-empty, pairwise disjoint subsets with $\bigcup_{j=1}^{k} C_j = \cU$. For each cluster $C \in \cP$, we define its \emph{weight} $w_C \coloneqq |C|/|\cU|$, representing the probability mass (or relative size) of cluster $C$ under the uniform user distribution.
\end{definition}

\begin{definition}[Prediction Model]
\label{def:prediction_model}
A \emph{prediction model} $\cM$ is defined by a partition $\cP_\cM$ of $\cU$. For each advertiser $i \in \cA$ and cluster $C \in \cP_\cM$, the model outputs the \emph{predicted conversion probability}:
\begin{equation}
\label{eq:p_iC_def}
p_{i,C} \coloneqq \frac{1}{|C|} \sum_{u \in C} p_i(u),
\end{equation}
where $p_i(u) = \mathrm{ctr}_i(u) \cdot \mathrm{cvr}_i(u)$ is the true conversion probability for advertiser $i$ and user $u$. This quantity $p_{i,C}$ is the primitive used in all auction analysis: bidders bid based on $p_{i,C}$, and all theoretical results in this paper are stated in terms of $p_{i,C}$.
\end{definition}

This definition captures the intuition that a prediction model groups users into clusters and assigns the same predicted conversion probability to all users within a cluster. The model is \emph{calibrated}: predictions equal true cluster averages~\citep{ma2018entire,gneiting2007probabilistic}. Crucially, this paper studies the effect of \emph{refining a perfectly calibrated information structure} rather than the effect of stochastic prediction error or miscalibration. (See \Cref{app:implementation} for implementation details.)

\begin{example}[Extreme Models]
\label{ex:extreme_models}
The \emph{coarsest model} $\cP = \{\cU\}$ predicts a single global average: $p_{i,\cU} = \frac{1}{|\cU|} \sum_{u \in \cU} p_i(u)$ for all advertisers $i$. The \emph{finest model} $\cP = \{\{u\} : u \in \cU\}$ predicts exact advertiser-specific conversion probabilities: $p_{i,\{u\}} = p_i(u)$ for each user $u$ and advertiser $i$.
\end{example}

\subsection{Model Refinement}
\label{subsec:model_refinement}

We define when one prediction model is ``better'' than another, inspired by filtrations in probability theory~\citep{durrett2019probability}. Unlike pointwise metrics (log-likelihood, AUC), model refinement provides a partial order under which proper scoring losses improve in expectation for calibrated nested partitions, enabling provable guarantees.

\begin{definition}[Model Refinement]
\label{def:model_refinement}
A prediction model $\cM_A$ \emph{refines} model $\cM_B$ (written $\cM_A \succeq \cM_B$) if every cluster in $\cP_{\cM_A}$ is contained in some cluster of $\cP_{\cM_B}$. We say that $\cM_A$ is \emph{finer} and $\cM_B$ is \emph{coarser}.
\end{definition}

This captures feature-set refinements: for example, a model segmenting users by \{age, location\} induces a coarser partition than a model that additionally uses \{browsing history\}, provided the added feature only splits existing clusters.

The refinement relation forms a partial order (reflexivity, transitivity, antisymmetry follow from standard set theory). In probability theory, a finer partition corresponds to a finer $\sigma$-algebra, capturing the idea that a finer model has access to more discriminative information.

The refinement definition captures a strong notion of model improvement: if $\cM_A \succeq \cM_B$, then for any advertiser $i$ and any concave Bayes risk $H$ associated with a proper scoring loss,
\[
\sum_{C \in \cP_{\cM_A}} w_C H(p_{i,C}) \leq \sum_{C \in \cP_{\cM_B}} w_C H(p_{i,C}).
\]
This follows from Jensen's inequality because each coarse prediction is the weighted average of the refined predictions within that coarse cluster. Thus refinement improves all proper scoring losses in their loss orientation~\citep{gneiting2007probabilistic}.

\subsection{Evaluation Criteria Metric (ECM)}
\label{subsec:ecm}

\begin{definition}[Evaluation Criteria Metric]
\label{def:ecm}
An \emph{Evaluation Criteria Metric} ($\ecm$) aggregates auction outcomes. For a model $\cM$, let $x_i(u; \cM) \in \{0, 1\}$ denote the allocation and $\pi_i(u; \cM) \geq 0$ the payment for advertiser $i$ on user $u$. We consider:
\begin{itemize}
    \item \textbf{Revenue:} $\mathrm{Revenue}(\cM) \coloneqq \sum_{u \in \cU} \sum_{i \in \cA} \pi_i(u; \cM)$
    \item \textbf{Welfare:} $\mathrm{Welfare}(\cM) \coloneqq \sum_{u \in \cU} \sum_{i \in \cA} x_i(u; \cM) \cdot v_i \cdot p_{i,u}$
    \item \textbf{Liquid Welfare:} $\mathrm{LiquidWelfare}(\cM) \coloneqq \sum_{i \in \cA} \min\left( B_i, \sum_{u} x_i(u; \cM) \cdot v_i \cdot p_{i,u} \right)$
\end{itemize}
Here $p_{i,u}$ denotes the true conversion probability for advertiser $i$ and user $u$, and $B_i$ is advertiser $i$'s budget~\citep{dobzinski2014liquidwelfare}. In proofs, we write metrics as weighted sums over clusters (see \Cref{app:implementation}).
\end{definition}

For each setting, let $\Phi_{\text{setting}}(\cM)$ denote the outcome mapping that selects allocations and payments induced by model $\cM$ under that setting's behavioral convention.

\begin{definition}[ECM Monotonicity]
\label{def:ecm_monotonicity}
An $\ecm$ is \emph{monotone} with respect to model refinement for a given auction format and bidder type if, for all models $\cM_A, \cM_B$ with $\cM_A \succeq \cM_B$:
$\ecm(\Phi_{\text{setting}}(\cM_A)) \geq \ecm(\Phi_{\text{setting}}(\cM_B))$,
where the $\ecm$ is evaluated on the allocations and payments selected by the outcome mapping for that setting (see \Cref{ass:equilibrium}).
\end{definition}

\begin{remark}[Setting-Dependent Outcome Mappings]
\label{rem:outcome-mapping}
The outcome mapping $\Phi_{\text{setting}}: \cM \to (\text{allocations}, \text{payments})$ varies by setting. For tCPA settings, $\Phi$ selects equilibrium outcomes under the binding-constraint convention. For MAX-CPA FPA, $\Phi$ compares designated feasible multiplier profiles across models without claiming a full equilibrium characterization. Our characterization in \Cref{tab:monotonicity} is thus a taxonomy of $\Phi$-monotonicity results across these conventions. This heterogeneity reflects the different behavioral assumptions appropriate to each setting.
\end{remark}

\subsection{Central Question}
\label{subsec:main_question}

We study which combinations of (auction format, bidder type, ECM) exhibit monotonicity:
\begin{itemize}
    \item \textbf{Auction formats:} First-price (FPA), VCG mechanism (equivalent to SPA for single-item auctions).
    \item \textbf{Bidder types:} Target CPA ($\tcpa$) and Maximum CPA (max-CPA).
    \item \textbf{Budget constraints:} Present or absent.
\end{itemize}

\emph{Central Question:} For which combinations is the ECM monotone with respect to model refinement?

A positive answer ensures that better models lead to better outcomes. A negative answer reveals misalignment: improving predictions can paradoxically hurt platform objectives. We provide a systematic characterization in \Cref{sec:results}, with proofs for positive results and numerical constructions for non-monotonicity results.


\section{Auction Mechanisms and Bidder Models}
\label{sec:auctions}

We formalize the auction mechanisms and bidder models underlying our analysis. For comprehensive background on auction theory, see \citet{krishna2009auction}; for autobidding systems, see \citet{aggarwal2019autobidding}.

\subsection{Auction Environment}

Consider advertisers $\cA = \{1, \ldots, n\}$ competing for impressions across a user population partitioned into clusters $\cP$ by prediction model $\cM$. For each advertiser $i$ and cluster $C$, the platform observes predicted conversion probability $p_{i,C}$, defined as the cluster-averaged conversion probability (\Cref{def:prediction_model}). We assume the model is well-calibrated within each cluster.

\subsection{Bidder Models}

\begin{definition}[tCPA Bidder]
\label{def:tcpa}
A \emph{target cost-per-acquisition (tCPA)} bidder $i$ has target $t_i > 0$ and seeks to maximize conversions subject to average cost per conversion $\leq t_i$.
\end{definition}

\begin{definition}[MAX-CPA Bidder]
\label{def:maxcpa}
A \emph{maximum cost-per-acquisition (MAX-CPA)} bidder $i$ has per-conversion value $v_i > 0$ and maximizes total expected value (conversions times $v_i$) minus total payment. This bidder obtains positive surplus from an impression only when expected cost per conversion satisfies $\pi/p_{i,C} \leq v_i$, where $\pi$ denotes the per-impression payment.
\end{definition}

The key distinction: tCPA bidders maximize conversions subject to an average cost constraint; MAX-CPA bidders maximize quasi-linear utility (value minus payment). Either type may have a budget constraint $B_i > 0$ limiting total expenditure.

\subsection{Uniform Bidding}

\begin{definition}[Uniform Bidding]
\label{def:uniform-bidding}
Under \emph{uniform bidding}, bidder $i$ selects multiplier $\mu_i \geq 0$ and bids $b_i(C) = \mu_i \cdot t_i \cdot p_{i,C}$ (tCPA) or $b_i(C) = \mu_i \cdot v_i \cdot p_{i,C}$ (MAX-CPA).
\end{definition}

For MAX-CPA bidders, we define the \emph{effective multiplier} $g_i := \mu_i \cdot v_i$, so bids simplify to $b_i(C) = g_i \cdot p_{i,C}$.

\begin{lemma}[Optimal Multiplier for tCPA in FPA]
\label{lem:optimal-multiplier}
For a tCPA bidder without budget constraints in a first-price auction, the optimal uniform multiplier is $\mu_i^* = 1$.
\end{lemma}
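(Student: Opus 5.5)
The plan is to decompose bidder $i$'s problem cluster by cluster, holding the competing bids fixed. Under uniform bidding with multiplier $\mu_i$, bidder $i$ bids $b_i(C)=\mu_i t_i p_{i,C}$ in cluster $C$. It wins $C$ if this bid is at least the highest competing bid $h_C \coloneqq \max_{j\neq i} b_j(C)$, under a fixed tie-breaking rule, and in a first-price auction it then pays its own bid. Writing $W(\mu_i)=\{C : \mu_i t_i p_{i,C}\ge h_C\}$ for the set of won clusters, the expected conversions are
\[
\mathrm{Conv}(\mu_i)=\sum_{C\in W(\mu_i)} w_C\, p_{i,C},
\]
and the expected spend is
\[
\mathrm{Spend}(\mu_i)=\mu_i t_i \sum_{C\in W(\mu_i)} w_C\, p_{i,C}=\mu_i t_i\,\mathrm{Conv}(\mu_i).
\]
The key identity that follows is that the average CPA equals $\mu_i t_i$ exactly, whenever $\mathrm{Conv}(\mu_i)>0$. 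This holds because every won impression is paid at $\mu_i t_i$ times its expected conversions.

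With this identity, the tCPA constraint $\mathrm{Spend}\le t_i\,\mathrm{Conv}$ becomes $\mu_i\le 1$, or the constraint is vacuous when nothing is won. So the feasible multipliers are exactly $[0,1]$. Next, $W(\mu_i)$ is monotone nondecreasing in $\mu_i$ under set inclusion, since raising $\mu_i$ only raises every bid. Hence $\mathrm{Conv}(\mu_i)$ is nondecreasing in $\mu_i$. Maximizing conversions over the feasible set is therefore achieved at the largest feasible multiplier, $\mu_i^*=1$. Here the CPA constraint binds exactly, which is consistent with the binding-constraint convention in \Cref{rem:outcome-mapping}.

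The main subtlety is the word ``optimal'' when conversions are flat in $\mu_i$ near $1$. In that case any $\mu_i$ in some interval $[\underline\mu,1]$ attains the same maximum, so the optimum is not unique. I would handle this by stating that $\mu_i=1$ is always optimal, and that it is the canonical choice under the binding-constraint convention. I would also note that ties in $h_C$ are resolved by the fixed tie-breaking rule, which does not affect the monotonicity of $W$. A second point to check is that $\mu_i>1$ is indeed infeasible: it is infeasible whenever it wins anything. If it wins nothing, it yields zero conversions, which is no better than $\mu_i=1$. So restricting to $[0,1]$ loses nothing.
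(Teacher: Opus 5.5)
Your proposal is correct and takes the same route as the paper's proof sketch. The CPA on every won cluster equals $\mu_i t_i$, so feasibility forces $\mu_i \le 1$, and monotonicity of the won set in $\mu_i$ then makes $\mu_i = 1$ optimal. Your treatment of ties, of non-uniqueness when conversions are flat near $1$, and of $\mu_i > 1$ just spells out details the paper leaves implicit.
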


\begin{proof}[Proof Sketch]
In FPA with uniform bidding $b_i(C) = \mu_i \cdot t_i \cdot p_{i,C}$, the CPA in any won cluster is $\mu_i \cdot t_i$. Thus $\mu_i \leq 1$ is required for feasibility. Since conversions increase with $\mu_i$, the optimal choice is $\mu_i = 1$.
\end{proof}

\subsection{Auction Formats}

\begin{definition}[First-Price Auction (FPA)]
\label{def:fpa}
The highest bidder wins and pays their bid: $i^* = \arg\max_i b_i$, payment $\pi_{i^*} = b_{i^*}$. (Ties broken by lowest index.)
\end{definition}

\begin{definition}[Second-Price Auction (SPA)]
\label{def:spa}
The highest bidder wins and pays the second-highest bid: $i^* = \arg\max_i b_i$, payment $\pi_{i^*} = \max_{j \neq i^*} b_j$.
\end{definition}

\begin{remark}[SPA-VCG Equivalence]
\label{rem:spa-vcg-equiv}
For single-item auctions, SPA and VCG are equivalent mechanisms: both allocate to the highest bidder and charge the second-highest bid. The VCG payment equals the externality imposed on other bidders, which for a single item is precisely the second-highest bid. Hence, all results stated for VCG in single-item settings apply equally to SPA.
\end{remark}

\begin{definition}[VCG Mechanism]
\label{def:vcg}
VCG computes the welfare-maximizing allocation and charges each winner their externality (the welfare loss imposed on others).
\end{definition}

\begin{definition}[LP Allocation Benchmark]
\label{def:lp-benchmark}
The \emph{LP allocation benchmark} is a centralized fractional allocation-and-payment rule that directly chooses allocation variables $x_{i,C} \in [0,1]$ subject to supply and budget feasibility constraints. Unlike FPA, SPA, or VCG, this benchmark is not a strategic auction mechanism: it assumes the platform knows the relevant bidder parameters and optimizes allocations and surrogate payments directly. We use it only as a comparison point for monotonicity under budget constraints.
\end{definition}

\subsection{Evaluation Criteria Metrics}

We use the revenue, welfare, and liquid welfare metrics from \Cref{def:ecm}. In the auction analysis, welfare is written cluster-wise as $\sum_i \E[\mathbf{1}[i \text{ wins}] \cdot v_i \cdot p_{i,C}]$, and liquid welfare caps each bidder's contribution at their budget.

\subsection{Equilibrium Concept}

\begin{assumption}[Equilibrium Convention]
\label{ass:equilibrium}
Bidders choose uniform multipliers $\mu_i$ to optimize their objective (conversions subject to CPA $\leq t_i$ for tCPA; quasi-linear utility for MAX-CPA) taking others' multipliers as given. We assume an equilibrium exists. When multiple equilibria preserve the same allocation, our positive results hold for all such equilibria; counterexamples exhibit one valid equilibrium (typically with binding CPA constraints for transparency), except for MAX-CPA FPA which uses designated feasible multiplier profiles without a full equilibrium claim. For FPA with tCPA bidders, symmetric equilibrium corresponds to $\mu_i = 1$ (truthful bidding).
\end{assumption}


\section{Main Results}
\label{sec:results}

This section presents our main theoretical contributions: a systematic characterization of when model refinement leads to improved platform metrics. We state all positive results with proof sketches (full proofs in \Cref{app:proofs}) and reference the detailed numerical constructions in \Cref{sec:counterexamples} for non-monotonicity results.

\subsection{Summary of Results}
\label{subsec:results-summary}

Our main finding is that ECM monotonicity depends critically on the interaction between three factors: (1) bidder type (tCPA vs.\ MAX-CPA), (2) auction format (FPA, SPA, VCG), and (3) presence of budget constraints. \Cref{tab:monotonicity} summarizes our systematic characterization. The table reports the auction and benchmark settings analyzed in this paper, rather than an exhaustive enumeration of every possible budgeted mechanism variant. Note that monotonicity is defined relative to a setting-dependent outcome mapping $\Phi_{\text{setting}}$; see \Cref{rem:outcome-mapping} for details on how outcomes are determined in each row.

\begin{table}[t]
\caption{Monotonicity of ECM under model refinement. \checkmark indicates monotonicity holds; $\times$ indicates a non-monotonicity construction exists; --- indicates metric not defined for this setting. Results assume uniform bidding. For budget-constrained settings, welfare refers to liquid welfare.\footnotemark}
\label{tab:monotonicity}
\begin{center}
\begin{small}
\begin{tabular}{llcc}
\toprule
\textbf{Bidder Type} & \textbf{Auction} & \textbf{Revenue} & \textbf{Welfare} \\
\midrule
tCPA & FPA & \checkmark & \checkmark \\
tCPA & VCG (= SPA) & $\times$ & $\times$ \\
tCPA + Budget & FPA & $\times$ & $\times$ \\
tCPA + Budget & LP & \checkmark & \checkmark \\
\midrule
MAX-CPA & FPA$^\dagger$ & $\times$ & $\times$ \\
MAX-CPA & VCG & $\times$ & \checkmark \\
MAX-CPA + Budget & FPA$^\dagger$ & $\times$ & $\times$ \\
MAX-CPA + Budget & LP & --- & \checkmark \\
\bottomrule
\end{tabular}
\end{small}
\end{center}
\vskip -0.1in
\end{table}
\footnotetext{For unconstrained settings, welfare is standard social welfare. For budget-constrained settings, welfare refers to liquid welfare (\Cref{def:ecm}), which caps each bidder's contribution at their budget. \emph{Outcome selection:} For tCPA auction settings (rows 1--3), ECM is evaluated at equilibrium multipliers where the binding constraint (CPA or budget) determines each bidder's multiplier; when unconstrained, $\mu=1$ is used. The dagger marks the MAX-CPA FPA rows (rows 5 and 7), where ECM compares designated feasible multiplier profiles across models without a full equilibrium claim (see \Cref{ass:equilibrium}); for MAX-CPA VCG (row 6), equilibrium multipliers are used. For tCPA + Budget LP, revenue refers to surrogate first-price payments under the LP allocation.}

\subsection{Positive Results: Monotonicity Guarantees}
\label{subsec:positive-results}

We begin with our positive results, establishing conditions under which model refinement provably improves platform metrics.

\subsubsection{First-Price Auction with tCPA Bidders}

Our first main result shows that FPA with tCPA bidders and uniform bidding satisfies revenue monotonicity.

\begin{theorem}[FPA Revenue Monotonicity]
\label{thm:fpa-revenue}
Consider tCPA bidders without budget constraints using uniform bidding with optimal multiplier $\mu_i = 1$ in a first-price auction. If model $\cM_A$ is a refinement of model $\cM_B$ (i.e., $\cM_A \succeq \cM_B$), then:
\begin{equation}
\text{Rev}(\cM_A) \geq \text{Rev}(\cM_B).
\end{equation}
\end{theorem}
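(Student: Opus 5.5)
With $\mu_i = 1$ for every bidder (\Cref{lem:optimal-multiplier}), bidder $i$ bids $t_i\,p_{i,C}$ on every user in cluster $C$. In a first-price auction the winner pays its own bid, so the revenue collected on cluster $C$ is $w_C \max_i t_i p_{i,C}$ per unit of user mass:
\begin{equation*}
\text{Rev}(\cM) = |\cU| \sum_{C \in \cP_{\cM}} w_C \, F(C), \qquad F(C) \coloneqq \max_{i \in \cA} t_i\, p_{i,C}.
\end{equation*}
Tie-breaking does not affect revenue, since any tied winner pays the same maximum bid. The plan is to show that for each coarse cluster $D \in \cP_{\cM_B}$, the refined clusters inside $D$ generate at least as much revenue as $D$ did. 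We then sum over $D$.

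First, by \Cref{def:model_refinement}, each $C \in \cP_{\cM_A}$ lies in exactly one $D \in \cP_{\cM_B}$, because the coarse clusters are disjoint. Since both are partitions of $\cU$, the refined clusters contained in a fixed $D$ partition $D$. Next, \eqref{eq:p_iC_def} gives the averaging identity
\begin{equation*}
p_{i,D} = \sum_{C \subseteq D,\, C \in \cP_{\cM_A}} \frac{|C|}{|D|}\, p_{i,C}
\end{equation*}
for every advertiser $i$. In other words, coarse predictions are convex combinations of fine predictions, with weights $w_C/w_D$.

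The key step is the Jensen-type inequality for the convex function $x \mapsto \max_i t_i x_i$, which is a maximum of linear functions. For each $i$, $t_i p_{i,D} = \sum_C \frac{w_C}{w_D}\, t_i p_{i,C} \le \sum_C \frac{w_C}{w_D}\, F(C)$. Taking the maximum over $i$ gives
\begin{equation*}
w_D\,F(D) \le \sum_{C\subseteq D} w_C\, F(C).
\end{equation*}
Summing over $D \in \cP_{\cM_B}$ and using that the refined clusters are grouped disjointly yields $\text{Rev}(\cM_A) \ge \text{Rev}(\cM_B)$.

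The only point needing care is whether holding $\mu_i = 1$ fixed is legitimate under both models. By \Cref{lem:optimal-multiplier} and \Cref{ass:equilibrium}, it is: the CPA in any won cluster is exactly $t_i$ regardless of the partition, so $\mu_i = 1$ stays optimal and feasible for each model. I expect no real obstacle beyond stating the cluster-grouping bookkeeping cleanly. As a by-product, \Cref{assumption:value_equals_target} turns the same argument into welfare monotonicity, because the winner's expected value $t_i p_{i,C}$ equals its payment.
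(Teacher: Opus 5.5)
Your proposal is correct and matches the paper's proof essentially step for step. Both write revenue as a $w_C$-weighted sum of $\max_i t_i p_{i,C}$, express each coarse prediction as the $w_C/w_D$-weighted average of the fine predictions inside it, apply Jensen to the convex map $x \mapsto \max_i t_i x_i$ (which you verify directly by bounding each linear term before taking the max), and sum over coarse clusters; your extra bookkeeping (the $|\cU|$ normalization, irrelevance of tie-breaking, and that the fine clusters inside $D$ partition $D$) is correct and only makes explicit what the paper leaves implicit.
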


\begin{proof}[Proof Sketch]
With $\mu_i = 1$, bidder $i$ bids $b_i(C) = t_i \cdot p_{i,C}$, so revenue is $\text{Rev}(\cM) = \sum_{C \in \cP} w_C \cdot \max_i t_i \cdot p_{i,C}$. The function $f(p) = \max_i t_i \cdot p_i$ is convex. When $\cM_A$ refines $\cM_B$, each coarse cluster $C^B$ splits into sub-clusters with weights summing to $w_{C^B}$ and calibration-preserving probabilities. By Jensen's inequality applied to this convex $f$, the finer model's revenue cannot be smaller. See \Cref{app:proofs} for the complete proof.
\end{proof}

The following theorem establishes that uniform bidding with $\mu = 1$ is optimal for tCPA bidders in FPA who seek to maximize conversions subject to their tCPA constraint.

\begin{theorem}[FPA Optimality for tCPA Bidders]
\label{thm:fpa-truthful}
For tCPA bidders without budget constraints in a first-price auction, uniform bidding with multiplier $\mu_i = 1$ (i.e., bidding $b_i = t_i \cdot p_{i,C}$) is:
\begin{enumerate}
    \item Optimal given the tCPA constraint: maximizes conversions while satisfying the target CPA.
    \item Revenue-maximizing among uniform bidding equilibria: lower multipliers result in lower bids and lower revenue.
    \item Welfare-maximizing: achieves the maximum possible social welfare.
\end{enumerate}
\end{theorem}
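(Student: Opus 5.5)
We prove the three parts separately. Each part reduces to the observation that, under uniform bidding in FPA, a tCPA bidder's realized CPA on every won cluster equals $\mu_i t_i$.

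\textbf{Part 1 (optimality).} We build on \Cref{lem:optimal-multiplier}. Fix the other bidders' multipliers. If bidder $i$ uses $\mu_i$ and wins a set $W(\mu_i)$ of clusters, it pays $\mu_i t_i p_{i,C}$ per impression in each won cluster and obtains $p_{i,C}$ expected conversions. Its average CPA is therefore
\[
\frac{\sum_{C\in W(\mu_i)} w_C \mu_i t_i p_{i,C}}{\sum_{C\in W(\mu_i)} w_C p_{i,C}} = \mu_i t_i ,
\]
provided the denominator is nonzero. So the constraint holds if and only if $\mu_i\le 1$. Next, we show that $W(\mu_i)$ is monotone nondecreasing in $\mu_i$. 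Each bid $\mu_i t_i p_{i,C}$ is nondecreasing in $\mu_i$, and the competing bids are fixed. Ties are broken by index, which does not depend on $\mu_i$, so a larger multiplier never loses a cluster it previously won. It follows that total conversions $\sum_{C\in W(\mu_i)} w_C p_{i,C}$ are nondecreasing in $\mu_i$, and the maximum over the feasible range $\mu_i \le 1$ is attained at $\mu_i=1$.

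\textbf{Part 2 (revenue).} Revenue under FPA is $\sum_C w_C \max_i \mu_i t_i p_{i,C}$. This expression is coordinatewise nondecreasing in the multiplier profile. Since feasibility forces every equilibrium profile to satisfy $\mu_i\le 1$ for all $i$, the profile $\mu\equiv 1$ pointwise dominates every other feasible profile. Hence its revenue, $\sum_C w_C\max_i t_i p_{i,C}$, is at least that of any other uniform-bidding equilibrium.

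\textbf{Part 3 (welfare).} By \Cref{assumption:value_equals_target}, $v_i=t_i$. The welfare of any allocation is at most $\sum_C w_C \max_i v_i p_{i,C}$, because each cluster contributes at most its best bidder's expected value. At $\mu\equiv 1$ the FPA winner in each cluster is exactly $\arg\max_i t_i p_{i,C}=\arg\max_i v_i p_{i,C}$, so this upper bound is attained.

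\textbf{Main obstacle.} The delicate step is interpreting welfare. It is evaluated with the true $p_{i,u}$ rather than the cluster predictions. Summing $v_i p_{i,u}$ over $u\in C$ gives $|C|\, v_i p_{i,C}$ by calibration (\eqref{eq:p_iC_def}), so the cluster-level expression above is the correct welfare. The maximality claim is therefore relative to allocations that are measurable with respect to the model's partition, and I would state this explicitly. With that qualification, Part 3 also follows from Part 2 together with the identity welfare $=$ revenue at $\mu=1$.
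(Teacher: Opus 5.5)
Your proposal is correct and follows essentially the same argument as the paper's sketch: the realized CPA on won clusters equals $\mu_i t_i$, so feasibility needs $\mu_i\le 1$; conversions are nondecreasing in $\mu_i$; and at $\mu\equiv 1$ each cluster goes to $\arg\max_i v_i p_{i,C}$. You add welcome care about index tie-breaking, calibration, and the fact that ``maximum welfare'' is relative to partition-measurable allocations.

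Two small repairs are needed. First, in Part 2 a bidder who wins no positive-conversion cluster can feasibly use $\mu_j>1$, even as a best response, so $\mu\equiv 1$ need not pointwise dominate every equilibrium profile. Replace that step with the per-cluster bound: revenue is the winner's bid $\mu_w t_w p_{w,C}\le t_w p_{w,C}\le \max_i t_i p_{i,C}$, since either $p_{w,C}=0$ or $\mu_w\le 1$. Second, your closing remark that Part 3 follows from Part 2 is not valid as stated, because in equilibrium welfare is at least revenue, which is the wrong direction; your direct upper-bound argument is what carries Part 3.
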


\begin{proof}[Proof Sketch]
\textbf{(1)} With $b_i(C) = \mu_i \cdot t_i \cdot p_{i,C}$, the CPA per won cluster is $\mu_i \cdot t_i$, so $\mu_i \leq 1$ is required. Since conversions increase with $\mu_i$, optimality requires $\mu_i = 1$. \textbf{(2)} At $\mu = 1$, bids are maximal under the CPA constraint. \textbf{(3)} Allocation goes to $\arg\max_i t_i \cdot p_{i,C}$, which is welfare-maximizing since value equals $t_i \cdot p_{i,C}$.
\end{proof}

\begin{corollary}[FPA Welfare Monotonicity for tCPA]
\label{cor:fpa-welfare-tcpa}
Under the conditions of \Cref{thm:fpa-revenue} and \Cref{assumption:value_equals_target} ($v_i=t_i$), with the optimal FPA multiplier $\mu_i=1$, welfare is also monotone: $\text{Welfare}(\cM_A) \geq \text{Welfare}(\cM_B)$.
\end{corollary}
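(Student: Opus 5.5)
We would prove the corollary by reducing welfare to revenue in the setting of \Cref{thm:fpa-revenue}, and then invoking that theorem directly.

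\textbf{Step 1: welfare as a cluster-wise maximum.} With $\mu_i=1$, bidder $i$ bids $b_i(C)=t_i\,p_{i,C}$ in cluster $C$. The winner is $i^*(C)=\arg\max_i t_i\,p_{i,C}$, ties broken by lowest index, and pays exactly $t_{i^*}\,p_{i^*,C}$. Welfare is evaluated cluster-wise, as in the auction-analysis convention of \Cref{sec:auctions}. The winner's contribution in cluster $C$ is $v_{i^*}\,p_{i^*,C}$, where $p_{i^*,C}$ is the true average conversion probability of $i^*$ over the users of $C$ by calibration (\Cref{def:prediction_model}). Using \Cref{assumption:value_equals_target} ($v_i=t_i$), this gives
\[
\text{Welfare}(\cM)=\sum_{C\in\cP_\cM} w_C\, t_{i^*(C)}\,p_{i^*(C),C}=\sum_{C\in\cP_\cM} w_C \max_i t_i\,p_{i,C}=\text{Rev}(\cM).
\]

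\textbf{Step 2: apply the revenue result.} Since welfare equals revenue under both $\cM_A$ and $\cM_B$, \Cref{thm:fpa-revenue} immediately gives $\text{Welfare}(\cM_A)=\text{Rev}(\cM_A)\ge \text{Rev}(\cM_B)=\text{Welfare}(\cM_B)$.

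\textbf{Alternative route.} One could also repeat the Jensen argument directly. Each coarse cluster $C^B$ is a disjoint union of fine clusters $C^A$, and calibration gives $p_{i,C^B}=\sum_{C^A\subseteq C^B}(w_{C^A}/w_{C^B})\,p_{i,C^A}$. Since $f(p)=\max_i t_i p_i$ is convex, as a maximum of linear functions, $w_{C^B} f(p_{\cdot,C^B})\le \sum_{C^A\subseteq C^B} w_{C^A} f(p_{\cdot,C^A})$. Summing over $C^B$ yields the claim.

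\textbf{Main obstacle: the identification in Step 1.} The ECM in \Cref{def:ecm} sums welfare per user, using true $p_{i,u}$ together with the allocation $x_i(u;\cM)$. Every user in a cluster receives the same allocation, so summing $v_{i^*}p_{i^*,u}$ over $u\in C$ equals $|C|\,v_{i^*}p_{i^*,C}$ by the definition of $p_{i,C}$ as a cluster average. The per-user and cluster-wise forms therefore agree after normalizing by $|\cU|$, so the rewriting is exact and no information about within-cluster heterogeneity is lost.
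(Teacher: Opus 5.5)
Your proposal is correct and follows the paper's proof. With $\mu_i=1$ and $v_i=t_i$, the first-price payment in each cluster equals the winner's value, so welfare coincides with revenue and monotonicity follows from \Cref{thm:fpa-revenue}. Your extra check, that summing the true per-user values $v_{i^*}p_{i^*,u}$ over a cluster gives $|C|\,v_{i^*}p_{i^*,C}$, makes explicit something the paper's one-line justification leaves implicit: the identity holds cluster by cluster, not user by user.
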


\begin{proof}
For tCPA bidders bidding $b_i = t_i \cdot p_{i,C}$ with $v_i=t_i$, revenue equals welfare (the winner pays exactly their expected value $v_i p_{i,C}$). Thus welfare monotonicity follows directly from revenue monotonicity.
\end{proof}

\begin{remark}[Incentive Compatibility in Autobidding (IC-ROS)]
\label{rem:fpa-ic}
FPA with uniform bidding is \emph{incentive compatible with respect to return-on-spend constraints} (IC-ROS) in the terminology of~\citet{aggarwal2024autobidding}. This autobidding-specific notion of IC means that, when tCPA targets $t_i$ and budgets $B_i$ are private information but predicted conversion probabilities $p_{i,C}$ are public, each bidder's optimal strategy is to truthfully report their constraints to the autobidding system. The autobidder then bids $b_i = \mu_i \cdot t_i \cdot p_{i,C}$, maximizing conversions subject to the reported constraints. This differs from classical dominant-strategy incentive compatibility (DSIC), which applies to quasi-linear utilities; tCPA bidders have non-linear objectives (conversion maximization subject to cost constraints).
\end{remark}

\subsubsection{VCG Mechanism with MAX-CPA Bidders}

The VCG mechanism provides welfare monotonicity guarantees specifically for MAX-CPA bidders.

\begin{theorem}[VCG Welfare Monotonicity for MAX-CPA]
\label{thm:vcg-welfare}
Consider MAX-CPA bidders without budget constraints. Under the VCG mechanism, if $\cM_A \succeq \cM_B$, then:
\begin{equation}
\text{Welfare}(\cM_A) \geq \text{Welfare}(\cM_B).
\end{equation}
\end{theorem}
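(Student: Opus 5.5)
The plan is to reduce the statement to the same convexity argument used for \Cref{thm:fpa-revenue}.

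\textbf{Step 1: VCG allocates to the highest true value in each cluster.} For MAX-CPA bidders with quasi-linear utility, VCG (equivalently SPA, by \Cref{rem:spa-vcg-equiv}) is dominant-strategy incentive compatible. Bidding one's expected value $v_i \cdot p_{i,C}$ in every cluster is therefore optimal. Under uniform bidding this is the effective multiplier $g_i = v_i$, i.e.\ $\mu_i = 1$. I would take this as the equilibrium selected by $\Phi$ under \Cref{ass:equilibrium}. If multiple equilibria exist, I would argue as the paper's convention allows: every equilibrium with the same allocation gives the same welfare. It follows that in each cluster $C$ the winner is $\arg\max_i v_i p_{i,C}$, since VCG maximizes reported welfare and reports are truthful.

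\textbf{Step 2: write welfare cluster-wise.} Because predictions are calibrated, the true value the winner realizes on cluster $C$, summed over $u \in C$, equals $|C| \cdot v_{i^*} p_{i^*,C}$ by \eqref{eq:p_iC_def}. This holds because the allocation is constant on $C$. Hence
\[
\text{Welfare}(\cM) = |\cU| \sum_{C \in \cP_\cM} w_C \max_i v_i p_{i,C}.
\]

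\textbf{Step 3: apply Jensen.} The map $f(q) = \max_i v_i q_i$ on $\R^n$ is convex, being a maximum of linear functions. Fix a coarse cluster $C^B \in \cP_{\cM_B}$ and the finer clusters $C \in \cP_{\cM_A}$ with $C \subseteq C^B$; these partition $C^B$ by \Cref{def:model_refinement}. The coarse prediction vector is the weighted average
\[
p_{\cdot,C^B} = \sum_{C \subseteq C^B} \frac{w_C}{w_{C^B}}\, p_{\cdot,C}.
\]
Jensen's inequality then gives
\[
w_{C^B} f(p_{\cdot,C^B}) \le \sum_{C \subseteq C^B} w_C f(p_{\cdot,C}).
\]
Summing over coarse clusters yields $\text{Welfare}(\cM_A) \ge \text{Welfare}(\cM_B)$.

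\textbf{Main obstacle.} The delicate step is Step 1: justifying that the outcome mapping indeed selects truthful multipliers. Uniform bidding restricts the strategy space, and VCG might admit other equilibria with inefficient allocations. I would handle this by noting two facts. First, truthful reporting is dominant in VCG even within the restricted class, since the truthful bid vector is itself a uniform bid with $g_i = v_i$. Second, the theorem is stated for the equilibrium where each bidder plays its dominant strategy, consistent with row 6 of \Cref{tab:monotonicity}.

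A secondary check is tie-breaking. Ties do not matter, because welfare depends only on the maximum value $\max_i v_i p_{i,C}$, not on which bidder attains it.
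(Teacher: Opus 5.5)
Your proposal is correct and follows the paper's proof. Under truthful (dominant-strategy) bidding with $g_i = v_i$, VCG gives each cluster to $\arg\max_i v_i p_{i,C}$, so welfare is $\sum_{C} w_C \max_i v_i p_{i,C}$, and Jensen's inequality for the convex map $q \mapsto \max_i v_i q_i$ yields monotonicity exactly as in the FPA revenue proof. Your added care about the $|\cU|$ normalization, equilibrium selection, and tie-breaking only makes explicit what the paper leaves implicit.
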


\begin{proof}[Proof Sketch]
VCG allocates to maximize welfare, and truthful bidding ($\mu_i = 1$) is dominant for MAX-CPA bidders. Welfare is $\text{Welfare}(\cM) = \sum_{C \in \cP} w_C \cdot \max_i v_i \cdot p_{i,C}$, which has identical structure to FPA revenue with $v_i$ replacing $t_i$. By the same Jensen argument (since $\max_i v_i \cdot p_i$ is convex), refinement cannot decrease welfare.
\end{proof}

\begin{remark}[VCG Revenue Non-Monotonicity]
VCG revenue monotonicity does \emph{not} hold for MAX-CPA bidders. VCG payments depend on the externality imposed on other bidders, which can increase or decrease under model refinement. See \Cref{app:counter-vcg-maxcpa} for a counterexample.
\end{remark}

\begin{remark}[VCG with tCPA Bidders]
For tCPA bidders, VCG provides monotonicity for \emph{neither} revenue nor welfare. The issue is that tCPA bidders do not have fixed per-conversion values---their effective value depends on the tCPA constraint binding status across clusters. Model refinement can change which clusters are binding, leading to non-monotonic outcomes. See \Cref{app:counter-vcg-tcpa} for a counterexample.
\end{remark}

\subsection{Negative Results: Non-Monotonicity}
\label{subsec:negative-results}

We now state our negative results. Detailed numerical constructions are provided in \Cref{sec:counterexamples}, with full calculations in \Cref{app:counterexamples}.

\subsubsection{VCG/SPA Non-Monotonicity for tCPA}

\begin{theorem}[VCG/SPA Non-Monotonicity for tCPA]
\label{thm:vcg-tcpa-nonmono}
There exist instances with tCPA bidders where VCG (equivalently, SPA for single-item auctions) exhibits non-monotonicity in both revenue and welfare:
\begin{align}
\cM_A \succeq \cM_B \quad \text{but} \quad &\text{Rev}(\cM_A) < \text{Rev}(\cM_B), \\
\cM_A \succeq \cM_B \quad \text{but} \quad &\text{Welfare}(\cM_A) < \text{Welfare}(\cM_B).
\end{align}
\end{theorem}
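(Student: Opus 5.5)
The plan is to exhibit a single small explicit instance: two tCPA bidders and two users (or two user types), a coarse model $\cM_B=\{\cU\}$ and the finest model $\cM_A=\{\{u_1\},\{u_2\}\}$. The outcome mapping is the one fixed in \Cref{ass:equilibrium}: each bidder's multiplier is set so that its average CPA constraint binds, or $\mu_i$ is capped at its natural maximum if unconstrained. In SPA the winner pays the second-highest bid, which is below its own bid. A tCPA bidder can therefore inflate its multiplier above $1$ and overbid on clusters where it wins cheaply, using the slack to subsidize clusters where the price is high.

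The key mechanism is a \emph{cross-subsidization} effect. Under the coarse model, bidder $1$ faces one pooled cluster, where its CPA slack lets it win everything. Under the refined model, bidder $1$ wins one user at a low second price, and this slack lets it raise $\mu_1$ and take the other user away from bidder $2$. Bidder $2$ has the higher true value $t_2 p_{2,u}$ on that user.

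\textbf{Welfare.} To show welfare decreases, I will choose parameters so that under $\cM_B$ the pooled cluster goes to the bidder with the higher pooled value. Under $\cM_A$, bidder $1$'s binding-CPA multiplier then lets it win a user where $t_2 p_{2,u} > t_1 p_{1,u}$. Welfare is computed with true values $v_i p_{i,u}$ using $v_i = t_i$ (\Cref{assumption:value_equals_target}), and this misallocation can make refined welfare strictly smaller.

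\textbf{Revenue.} For revenue, I will arrange that under $\cM_B$ the runner-up's pooled bid is high, so the second price is high. Under $\cM_A$, each user's second-highest bid is low because the bidders specialize: on each user, the losing bidder has small $p_{j,u}$. Both bidders' predictions on the pooled cluster are averages, so the pooled runner-up bid can exceed the sum of the split second prices. This is the concavity-type failure of a second-order statistic, which is not convex in $p$ in the way the max in \Cref{thm:fpa-revenue} is.

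For a concrete candidate, take $\cU=\{u_1,u_2\}$ with $p_1=(0.9,0.1)$ and $p_2=(0.1,0.9)$, with $t_1=t_2=1$ and multipliers solved from the binding CPA equations. Under $\cM_B$ both pooled predictions equal $0.5$, so the second price is high. Under $\cM_A$ each winner pays about $0.1\mu_j$. The refined revenue is then clearly lower unless the multipliers blow up. Each multiplier is pinned by the opponent's low competing bids, so I expect it to be capped by an upper bound on $\mu$ or to require a tie-breaking convention.

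The main obstacle is consistency of the equilibrium convention. With binding CPA constraints in SPA, a bidder paying a second price $\mu_j t_j p_{j,C}$ wants $\mu_i\to\infty$, and equilibria can be degenerate. I will therefore solve the fixed point explicitly in each model: $\mu_i$ is the largest value such that total payment over won clusters is at most $t_i$ times conversions over won clusters, given the opponents' multipliers. I will check existence by direct case analysis on who wins each cluster, and perturb the numbers asymmetrically so that the equilibrium is unique and strict.

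If the revenue and welfare effects cannot be realized in one instance, I will give two separate instances, one per inequality. The theorem only asserts the existence of instances, so this suffices. All values will be verified by direct arithmetic, deferred to \Cref{app:counter-vcg-tcpa}.
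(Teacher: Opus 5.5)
\textbf{Your proposal does not exhibit a counterexample, and the one concrete candidate fails for both metrics.} Because the theorem is an existence claim, the verified instance is the entire proof. Take $p_1=(0.9,0.1)$, $p_2=(0.1,0.9)$, $t_1=t_2=1$ under the binding-CPA convention. In the pooled model the winner pays $2\cdot 0.5\,\mu_{\text{loser}}=\mu_{\text{loser}}$ for $1.0$ conversion, so binding gives $\mu_{\text{loser}}=1$ and revenue $=$ welfare $=1.0$. In the singleton model bidder $1$ wins $u_1$ paying $0.1\mu_2$ for $0.9$ conversions, and bidder $2$ wins $u_2$ paying $0.1\mu_1$. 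Binding gives $\mu_1=\mu_2=9$. This is an equilibrium, since taking the other user would cost CPA $9>1$, and revenue $=$ welfare $=1.8$. Both metrics go \emph{up}: the multipliers do ``blow up,'' by exactly enough to absorb the slack. Your alternative selection (``largest feasible $\mu_i$'') does not help. In SPA a bidder's own multiplier never enters its own price on a fixed winning set, so its own constraint cannot pin it. In your candidate this selection has no fixed point: in the relevant range it asks for $\mu_1\uparrow 9\mu_2$ and $\mu_2=9\mu_1$. The paper instead pins each multiplier through the \emph{opponent's} binding constraint, e.g.\ $g_B=t_A(p_{0,A}+p_{1,A})/(\hat p_{0,B}+\hat p_{1,B})$.

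\textbf{The mechanism is also misidentified.} When every winner's tCPA binds, each winner pays exactly $t_i$ times its conversions, so $\mathrm{Rev}=\sum_i t_i\cdot\mathrm{conv}_i=\mathrm{Welfare}$ (using $v_i=t_i$). Revenue is then not a second-order statistic of fixed bids. Your ``specialization lowers second prices'' story is the fixed-bid logic of the MAX-CPA VCG example and does not apply here. Both inequalities need the same thing: refinement must move the equilibrium \emph{allocation} to one with fewer $t$-weighted conversions. Your two-user, $\{\cU\}$-versus-singletons design cannot do this. Suppose that in the singleton model bidder $i$ wins $x$ and $j$ wins $y$, with both CPAs binding. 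Then $\mu_j t_j p_{j,x}=t_ip_{i,x}$ and $\mu_i t_i p_{i,y}=t_jp_{j,y}$. Winning forces $\mu_i,\mu_j\ge 1$, hence $t_ip_{i,x}\ge t_jp_{j,x}$ and $t_jp_{j,y}\ge t_ip_{i,y}$. So the refined allocation is efficient, and refined revenue $=$ welfare $=\sum_u\max_i t_ip_{i,u}$, which dominates coarse welfare and hence coarse revenue. If instead one bidder takes both users (your ``take the other user away'' scenario), the same outcome, with the same payments and welfare, is also an equilibrium of the pooled model. Any ``decrease'' would then come from selecting different equilibria, not from refinement. The missing idea is cross-subsidization across \emph{several} sub-clusters won by the same bidder, which moves the allocation threshold $\mu_i/\mu_j$ away from $1$. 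The paper's instance does exactly this ($t_A=10$, $t_B=1$, four auctions, coarse $\{\{0,1\},\{2,3\}\}$). In the fine model $B$ buys auction $3$ at CPA $\approx 0.21$ against $A$'s tiny bid. It uses that slack to win auction $1$, where $A$'s value $0.5$ exceeds $B$'s $0.3$, so revenue $=$ welfare falls from $3.23$ to $3.03$.
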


The key intuition for SPA/VCG non-monotonicity is that payments depend on the \emph{second-highest} bid (the externality). Model refinement can cause the ranking of bidders to change across sub-clusters in ways that reduce competitive pressure, thereby reducing payments. For tCPA bidders specifically, the mismatch between tCPA constraints and VCG's welfare-maximizing allocation compounds this effect. See \Cref{app:counter-vcg-tcpa} for a counterexample demonstrating a 6.2\% decrease in both revenue and welfare.

\begin{theorem}[VCG Revenue Non-Monotonicity for MAX-CPA]
\label{thm:vcg-maxcpa-revenue-nonmono}
There exist instances with MAX-CPA bidders where VCG welfare is monotone but revenue is not:
\begin{equation}
\cM_A \succeq \cM_B \quad \text{but} \quad \text{Rev}(\cM_A) < \text{Rev}(\cM_B).
\end{equation}
\end{theorem}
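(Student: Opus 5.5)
We prove \Cref{thm:vcg-maxcpa-revenue-nonmono} with an explicit numerical construction. Welfare monotonicity for any instance already follows from \Cref{thm:vcg-welfare}, so the only task is to exhibit a refinement under which VCG revenue strictly drops. We use truthful bidding ($\mu_i=1$, dominant under VCG for MAX-CPA bidders), so each bidder bids $v_i p_{i,C}$. By \Cref{rem:spa-vcg-equiv}, revenue per cluster is the second-highest bid, giving $\text{Rev}(\cM)=\sum_{C} w_C \cdot \mathrm{2nd}\max_i v_i p_{i,C}$. The second-order-statistic function is not convex, and that is the gap we exploit. The Jensen argument behind \Cref{thm:fpa-revenue} and \Cref{thm:vcg-welfare} works for $\max$ but fails for $\mathrm{2nd}\max$.

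The construction uses two users $\cU=\{u_1,u_2\}$ and two advertisers with $v_1=v_2=1$. We take $\cM_B=\{\cU\}$ as the coarsest model and $\cM_A=\{\{u_1\},\{u_2\}\}$ as the finest model, so $\cM_A\succeq\cM_B$ by \Cref{ex:extreme_models}. The idea is that each advertiser is strong on a different user and nearly worthless on the other. Choose $p_1(u_1)=1$, $p_1(u_2)=0$, $p_2(u_1)=0$, $p_2(u_2)=1$.

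Under $\cM_B$, both predictions equal $1/2$. The bids tie at $1/2$, so the second price is $1/2$ and $\text{Rev}(\cM_B)=1\cdot\tfrac12=\tfrac12$. Under $\cM_A$, each singleton cluster has one bid equal to $1$ and the other equal to $0$. The second price is therefore $0$ in both clusters, and $\text{Rev}(\cM_A)=0<\tfrac12$. This is the intuition stated after \Cref{thm:vcg-tcpa-nonmono}: pooling forces the two bidders to compete head-on, and refinement separates them and removes competitive pressure.

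For the welfare side, we check directly that $\text{Welfare}(\cM_A)=1\ge\tfrac12=\text{Welfare}(\cM_B)$. Here welfare is evaluated with true $p_{i,u}$: under $\cM_B$ the tie is broken to bidder 1, who wins both users and earns $1+0$, averaged to $\tfrac12$. This agrees with \Cref{thm:vcg-welfare}.

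The delicate point is the tie under $\cM_B$. It is harmless, since any tie-breaking rule gives the same revenue there. Still, to stay robust one can perturb the instance, for example setting $p_1(u_2)=\epsilon$, and check that strict inequalities persist. We would also note that the instance keeps all probabilities in $[0,1]$ and has strictly positive $v_i$, as \Cref{def:maxcpa} requires.
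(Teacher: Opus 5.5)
Your proposal is correct and follows essentially the same route as the paper: an explicit two-bidder, two-segment instance where each bidder dominates a different segment, so pooling forces head-on competition (a positive second price) while refinement separates the bidders and lowers VCG payments, with welfare monotonicity inherited from \Cref{thm:vcg-welfare}. The paper uses a non-degenerate instance ($v_1=10$, $v_2=8$, probabilities $(0.2,0.6)$ and $(0.5,0.1)$; revenue $2.4 \to 1.4$, welfare $4.0 \to 5.0$) that has no tie in the coarse model, and your remarks that the tie does not affect the VCG payment, together with your $\epsilon$-perturbation, cover that point adequately.
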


VCG payments are externality-based: each winner pays the welfare loss they impose on others. Model refinement can reduce competition in certain clusters, decreasing externalities and thus payments. See \Cref{app:counter-vcg-maxcpa} for a counterexample.

\subsubsection{Budget Constraints Break Monotonicity}

\begin{theorem}[FPA with Budget Non-Monotonicity]
\label{thm:fpa-budget-nonmono}
There exist instances with tCPA bidders having budget constraints where model refinement decreases FPA revenue:
\begin{equation}
\cM_A \succeq \cM_B \quad \text{but} \quad \text{Rev}(\cM_A) < \text{Rev}(\cM_B).
\end{equation}
\end{theorem}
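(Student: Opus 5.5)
The plan is a small explicit counterexample: two bidders, a universe of two (or a few) users, and a comparison of the coarsest model $\cP_{\cM_B}=\{\cU\}$ with the finest model $\cM_A$ (\Cref{ex:extreme_models}). Since $\cM_A \succeq \cM_B$ holds trivially, it suffices to exhibit one instance and compute the equilibrium outcomes under the binding-constraint convention of \Cref{ass:equilibrium}. Under that convention each bidder uses $\mu_i=1$ when unconstrained, and otherwise lowers $\mu_i$ until its budget binds.

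First I fix the spend convention for a budget-binding bidder. Under deterministic FPA, a single threshold $\mu_i$ makes spend jump discontinuously, so I will let the budget-binding bidder win the marginal cluster at the tie point only with the fraction needed to exhaust $B_i$. This matches the pacing or throttling reading of the equilibrium, and at the tie I will fix a specific split for transparency. Revenue is then a sum over clusters of $w_C$ times the winning bid, with the paced bidder contributing exactly $B_i$.

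The instance is built around the mechanism that breaks the Jensen argument of \Cref{thm:fpa-revenue}: with budgets, revenue is no longer $\sum_C w_C \max_i t_i p_{i,C}$ but is capped by budgets. I take bidder $1$ to be budgeted, with conversion probability concentrated on one user $u_1$, and bidder $2$ to be unbudgeted but with a fairly flat profile, so that bidder $2$ is the high bidder on $u_2$.
\begin{itemize}
\item In the coarse model, the averaged bid $t_1 p_{1,\cU}$ should lose to $t_2 p_{2,\cU}$ everywhere. Bidder $2$ then wins both users at $\mu_2=1$, giving revenue $t_2 p_{2,\cU}$.
\item In the refined model, bidder $1$ should outbid bidder $2$ on $u_1$, while bidder $2$ still wins $u_2$ at $t_2 p_2(u_2)$. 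Bidder $1$ is budget-binding and pays $B_1$ on the fraction of $u_1$ it wins.
\end{itemize}
The difficulty is that in FPA the winning bid on $u_1$ always exceeds bidder $2$'s bid there. So to get a net revenue loss, the budget-capped fraction must leave part of $u_1$ unserved or sold cheaply. I would arrange this by having bidder $2$, whose own profile on the refined cluster is low, win the residual share of $u_1$ at its small bid $t_2 p_2(u_1)$.

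With that arrangement, refined revenue is $w_{u_1}$ times a mixture of bidder $1$'s capped spend and bidder $2$'s low bid, plus $w_{u_2} t_2 p_2(u_2)$. Coarse revenue is $t_2$ times the full average of $p_2$. I will then choose $B_1$ small, $p_1(u_1)$ large, and $p_2(u_1) < p_2(u_2)$, solve for the paced $\mu_1$, and check the strict inequality $\text{Rev}(\cM_A) < \text{Rev}(\cM_B)$ numerically.

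The main obstacle is the spend and tie convention. Under a strict winner-pays-bid rule with indivisible clusters, the winner always pays at least the loser's bid, and refinement tends to raise revenue. The construction therefore lives or dies on specifying, consistently with \Cref{ass:equilibrium}, how a budget-exhausted bidder leaves part of a cluster to a lower bidder. A second thing to check is that bidder $2$'s multiplier stays at $\mu_2=1$, i.e.\ its CPA constraint is satisfied on the residual impressions, which holds because its CPA in FPA equals $\mu_2 t_2$. If the fractional convention proves unconvincing, the fallback is a second budgeted bidder: in the coarse model both bidders exhaust their budgets, giving revenue $B_1+B_2$, and refinement lets one of them win only cheap, well-targeted impressions and finish unconstrained below its budget.
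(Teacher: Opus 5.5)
\textbf{There is a genuine gap: the construction you develop can never give a strict decrease, and the fractional split does not rescue it.}

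Bidder $2$ keeps $\mu_2=1$ in both models. So every cluster $C$ of either partition is sold, whole or split, at a per-impression price of at least $t_2p_{2,C}$. By calibration, $\text{Rev}(\cM)\ge\sum_C w_C t_2 p_{2,C}=t_2p_{2,\cU}$ for both $\cM_A$ and $\cM_B$. In your coarse model bidder $1$ wins nothing, so $\text{Rev}(\cM_B)$ sits exactly at this lower bound.

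In the fine model, consider the two shares of $u_1$:
\begin{itemize}
\item The residual share sold to bidder $2$ pays $t_2p_2(u_1)$, which the coarse revenue already counts.
\item The share sold to bidder $1$ pays at least that. At the tie where you split, the two revenues coincide. Off the tie, bidder $1$ takes all of $u_1$ at a strictly higher price.
\end{itemize}
Hence $\text{Rev}(\cM_A)\ge\text{Rev}(\cM_B)$ always. The obstruction is not indivisibility. It is that your coarse outcome has no premium of winning bids over the unbudgeted bidder's bids, so refinement has nothing to erode. Separately, the paper uses deterministic FPA with $x_i(u)\in\{0,1\}$, lowest-index ties and no pacing, so your fractional convention is outside the model.

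\textbf{The missing idea.} The budgeted bidder must already win, and exhaust $B_1$, in the coarse model at a price strictly above bidder $2$'s coarse bid. Its payment is then pinned at $B_1$ in both models, and revenue is $B_1$ plus bidder $2$'s spend. Refinement can lower revenue by letting bidder $1$ stretch the same $B_1$ over impressions bidder $2$ used to pay for. That is the paper's instance:
\begin{itemize}
\item Under $\{1,2\},\{3,4\}$, bidder $1$ wins $\{3,4\}$ at $1.5925$ per auction against bidder $2$'s $0.787$. Bidder $2$ pays $1.1709$ on each of auctions $1,2$. Revenue is $5.5268$.
\item With singletons, bidder $1$ paces to $\alpha_1=1.9528$ and again spends exactly $B_1=3.185$, but on auctions $1,3,4$. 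Revenue drops to $4.5977$.
\end{itemize}

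\textbf{Your fallback.} It is a valid and arguably cleaner alternative. Two budgets cap revenue at $B_1+B_2$, so a slack bidder in the refined model makes the strict drop immediate. However, it cannot use $\{\cU\}$, because a single cluster goes to one bidder, and it still needs an explicit instance. One that works: $t_1=t_2=1$, users $1,2,3$ with $p_1=(1,1,0)$ and $p_2=(0,0.4,0.9)$, $B_1=1$, $B_2=1.1$, coarse partition $\{\{1\},\{2,3\}\}$, spend summed over users.
\begin{itemize}
\item Coarse: bidder $1$ wins user $1$ at $\mu_1=1$ (spend $1$). Bidder $2$ wins $\{2,3\}$ at $\mu_2=1.1/1.3$ (bid $0.55>0.5$, spend $1.1$). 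Bidder $1$ could only take $\{2,3\}$ with $\mu_1\ge1.1$, which violates its target. Revenue is $2.1$.
\item Fine: bidder $1$ at $\mu_1=0.5$ wins users $1,2$ (bid $0.5>0.4$, spend $1$). Bidder $2$ at $\mu_2=1$ wins only user $3$ (spend $0.9<1.1$). Revenue is $1.9$.
\end{itemize}
As written, though, this route is a one-line afterthought, and the main plan cannot succeed.
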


With budgets, the optimal multiplier is no longer $\mu = 1$. Bidders must pace to avoid depleting their budget too quickly. Model refinement can change competitive dynamics---in particular, revealing segments where a budget-constrained bidder can win more auctions with lower bids. See \Cref{app:counter-fpa-budget} for a numerical counterexample demonstrating a 16.8\% revenue decrease.

\subsubsection{LP Monotonicity with Budget Constraints}

\begin{theorem}[LP Benchmark Monotonicity]
\label{thm:lp-mono}
Consider a centralized LP benchmark that chooses fractional allocations and surrogate payments directly, rather than modeling strategic bidder behavior in an auction. The LP explicitly enforces budget feasibility through linear constraints of the form $\sum_C w_C x_{i,C} a_i p_{i,C} \le B_i$, where $a_i$ is a bidder-specific constant (e.g., $t_i$ for tCPA surrogate payments or $v_i$ for MAX-CPA value budgets). This LP-based fractional allocation benchmark guarantees welfare monotonicity. Define:
\begin{equation}
\text{Welfare}_{\text{LP}}(\cM) \coloneqq \sum_{i \in \cA} \sum_{C \in \cP} w_C \cdot x^*_{i,C} \cdot v_i \cdot p_{i,C},
\end{equation}
where $x^* = \{x^*_{i,C}\}$ is the optimal LP solution. For tCPA bidders (where $v_i = t_i$), we additionally define:
\begin{equation}
\text{Rev}_{\text{LP}}(\cM) \coloneqq \sum_{i \in \cA} \sum_{C \in \cP} w_C \cdot x^*_{i,C} \cdot t_i \cdot p_{i,C}.
\end{equation}
Note: $\text{Rev}_{\text{LP}}$ is a \emph{surrogate metric} representing payments assigned by the LP benchmark, not equilibrium revenue from a strategic auction (see \Cref{rem:lp-ic}). For tCPA bidders at $\mu_i=1$, these surrogate payments coincide with first-price payments and $\text{Rev}_{\text{LP}} = \text{Welfare}_{\text{LP}}$ since $v_i = t_i$.

If $\cM_A \succeq \cM_B$, then:
\begin{itemize}
    \item For tCPA or MAX-CPA bidders: $\text{Welfare}_{\text{LP}}(\cM_A) \geq \text{Welfare}_{\text{LP}}(\cM_B)$.
    \item For tCPA bidders only: $\text{Rev}_{\text{LP}}(\cM_A) \geq \text{Rev}_{\text{LP}}(\cM_B)$.
\end{itemize}
\end{theorem}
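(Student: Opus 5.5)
The argument is a feasibility-lifting argument: any feasible LP solution under the coarse model $\cM_B$ induces a feasible LP solution under the finer model $\cM_A$ with the same objective value. Optimizing over $\cM_A$ can then only do better.

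Fix the LP under model $\cM$ as in the statement. It maximizes $\sum_i\sum_{C\in\cP_\cM} w_C x_{i,C} v_i p_{i,C}$ subject to three constraints: the supply constraints $\sum_i x_{i,C}\le 1$ for each $C$, the bounds $x_{i,C}\in[0,1]$, and the budget constraints $\sum_C w_C x_{i,C} a_i p_{i,C}\le B_i$. Let $x^B$ be an optimal solution for $\cM_B$.

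\textbf{Lifting.} Since $\cM_A\succeq\cM_B$, each $C'\in\cP_{\cM_A}$ lies in a unique $C\in\cP_{\cM_B}$; write $C=\rho(C')$. Define the lifted allocation $x^A_{i,C'} := x^B_{i,\rho(C')}$. Supply and box constraints hold cluster by cluster, so they are inherited immediately.

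\textbf{Key identity.} Every linear term is preserved under lifting. For a fixed coarse $C$, its subclusters $C'$ partition $C$, and calibration (\Cref{def:prediction_model}) gives
\[
\sum_{C'\subseteq C} w_{C'}\, p_{i,C'} = \sum_{C'\subseteq C}\frac{|C'|}{|\cU|}\cdot\frac{1}{|C'|}\sum_{u\in C'}p_i(u) = \frac{1}{|\cU|}\sum_{u\in C}p_i(u) = w_C\, p_{i,C}.
\]
Multiplying by the constant $x^B_{i,C}a_i$ and summing over $C$ shows that each budget constraint under $x^A$ has exactly the same left-hand side as under $x^B$, so $x^A$ is feasible. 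Multiplying instead by $x^B_{i,C}v_i$ shows that the objective is unchanged.

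\textbf{Conclusion.} The quantities $\text{Welfare}_{\text{LP}}(\cM_A)\ge$ objective$(x^A)$ $=\text{Welfare}_{\text{LP}}(\cM_B)$, which gives welfare monotonicity for both bidder types. For tCPA bidders, $v_i=t_i$ and the LP objective coincides with $\text{Rev}_{\text{LP}}$, so revenue monotonicity follows identically.

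\textbf{Main obstacle.} The delicate point is ensuring that $\text{Rev}_{\text{LP}}$ is evaluated at an optimizer of the same LP objective. Otherwise, ties among multiple optima could give different revenue. For tCPA, $\text{Rev}_{\text{LP}}=\text{Welfare}_{\text{LP}}$ identically, so this issue disappears. I would state explicitly that the LP maximizes welfare and that $\text{Rev}_{\text{LP}}$ is defined at its optimum. Existence of an optimum is clear: the feasible set is a nonempty compact polytope, since $x=0$ is feasible.
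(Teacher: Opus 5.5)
Your proposal is correct and follows essentially the same lifting argument as the paper: copy the coarse optimum $x^B$ to every sub-cluster, use the calibration identity $\sum_{C'\subseteq C} w_{C'}\, p_{i,C'} = w_C\, p_{i,C}$ to show that supply, budget, and objective are all preserved, and conclude that the refined optimum is at least as large. Your additions are small but welcome points the paper leaves implicit: you derive the calibration identity from the cluster-average definition, you note that an optimum exists, and you observe that $\text{Rev}_{\text{LP}}$ coincides with the LP objective for tCPA bidders, so ties among optima are harmless.
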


\begin{proof}[Proof Sketch]
A \emph{lifting} argument: given optimal $x^B$ for $\cM_B$, construct feasible $x^A$ for $\cM_A$ by copying each allocation to all sub-clusters ($x^A_{i,C'} = x^B_{i,C}$ for $C' \subseteq C$). Calibration preservation ensures supply, budget, and objective all match. Thus the refined optimum cannot be smaller. See \Cref{app:lp-mono}.
\end{proof}

\begin{remark}[LP as a Non-Strategic Benchmark]
\label{rem:lp-ic}
Unlike FPA with uniform bidding (\Cref{rem:fpa-ic}), LP-based allocation is \emph{not} incentive compatible and should be interpreted as a centralized allocation-and-payment benchmark, not an equilibrium statement about rational autobidders. The platform must know each bidder's private target $t_i$ (or value $v_i$) and budget $B_i$ to solve the allocation LP, and bidders may have incentives to misreport these values. The monotonicity result is therefore about feasibility-preserving allocation under refinement: any feasible coarse LP solution can be lifted to the refined partition while preserving the LP's explicit linear constraints.
\end{remark}

\subsubsection{MAX-CPA Non-Monotonicity in FPA}

\begin{theorem}[FPA Non-Monotonicity for MAX-CPA Under Designated Profiles]
\label{thm:fpa-maxcpa-nonmono}
There exist instances with MAX-CPA bidders in FPA and two feasible uniform-multiplier profiles (one for each model) such that both revenue and welfare decrease under model refinement:
\begin{align}
\cM_A \succeq \cM_B \quad \text{but} \quad &\text{Rev}(\cM_A) < \text{Rev}(\cM_B), \\
\cM_A \succeq \cM_B \quad \text{but} \quad &\text{Welfare}(\cM_A) < \text{Welfare}(\cM_B).
\end{align}
\end{theorem}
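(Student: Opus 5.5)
The statement is existential, so the plan is to give one explicit small instance. It consists of a coarse model $\cM_B$, a refinement $\cM_A \succeq \cM_B$, and one designated feasible effective-multiplier profile $g^B$, $g^A$ for each model. Revenue and welfare are then computed cluster-wise as weighted sums over clusters. Because \Cref{ass:equilibrium} and \Cref{rem:outcome-mapping} only require designated feasible profiles for MAX-CPA FPA, I only need each profile to be feasible, not an equilibrium. Concretely, every winner must pay a per-conversion cost $g_i \le v_i$ in every cluster it wins, so its surplus is nonnegative. This removes any fixed-point computation.

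\textbf{Construction.} I would use two bidders and a universe $\cU = \{u_1,u_2\}$ with $\cM_B = \{\cU\}$ and $\cM_A = \{\{u_1\},\{u_2\}\}$.

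Bidder 1 has high value $v_1$ and roughly uniform conversion probabilities, $p_1(u_1)=p_1(u_2)=p$. Bidder 2 has lower value $v_2 < v_1$ and concentrated probabilities, for instance $p_2(u_1)=2q$ and $p_2(u_2)=0$, so that $p_{2,\cU}=q$.

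\begin{itemize}
\item Under $\cM_B$, I designate $g_1 = v_1$ and $g_2$ small. Bidder 1 wins the single cluster, so revenue equals welfare, namely $v_1 p$.
\item Under $\cM_A$, I designate a shaded $g_1$, close to $v_2$, with $g_2 = v_2$. Bidder 2 then wins $u_1$ whenever $2 q v_2 > g_1 p$.
\end{itemize}

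The parameters are tuned so that bidder 2 wins $u_1$ with value $2q v_2 < 2 v_1 p$, which lowers welfare. Revenue is the sum of $\tfrac12$ times the winning bid in each cluster and also falls, because bidder 1's bid on $u_2$ is shaded. Both profiles are feasible, since every winner's per-conversion payment equals its $g_i \le v_i$.

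\textbf{Verification.} First I check the refinement relation and calibration: $p_{i,\cU}$ is the average of $p_i(u_1)$ and $p_i(u_2)$. Next I compute the winner in each cluster, with ties broken by lowest index as in \Cref{def:fpa}. Then I form
\[
\mathrm{Rev}=\sum_C w_C \max_i g_i p_{i,C}
\quad\text{and}\quad
\mathrm{Welfare}=\sum_C w_C\, v_{i^*(C)} p_{i^*(C),C},
\]
and compare the two models.

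For example, take $v_1=10$, $p=0.5$, $v_2=4$, $q=0.5$, $g_2^B=0$, $g_1^A=4$, $g_2^A=4$.

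\begin{itemize}
\item \textbf{Model $\cM_B$.} Revenue and welfare both equal $5$.
\item \textbf{Model $\cM_A$, cluster $u_1$.} Bidder 2 bids $4$ and bidder 1 bids $2$, so bidder 2 wins with value $4$ and pays $4$.
\item \textbf{Model $\cM_A$, cluster $u_2$.} Bidder 1 bids $2$ and wins, with value $5$.
\item \textbf{Comparison.} Under $\cM_A$, revenue is $\tfrac12(4+2)=3<5$ and welfare is $\tfrac12(4+5)=4.5<5$. Both strict inequalities hold.
\end{itemize}

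\textbf{Main obstacle.} The arithmetic is routine. The delicate point is justifying that the designated profiles are reasonable, not merely that they exist. I would at least argue that the $\cM_A$ profile reflects plausible shading. Bidder 1 lowers $g_1$ because in FPA, once refinement reveals that bidder 2 has no competitive bid on $u_2$, bidder 1 can keep winning $u_2$ while paying less. The statement only claims feasible designated profiles, so feasibility suffices formally. I would still arrange the numbers so that bidder 1 weakly prefers the shaded profile. A natural attempt is to compare bidder 1's utility under $g_1=4$ with its utility under $g_1=10$, taking bidder 2's profile as fixed. Shading is profitable here because the cheap win on $u_2$ outweighs the surplus lost on $u_1$.
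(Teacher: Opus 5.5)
Your instance is correct for the statement as literally written. Calibration gives $p_{1,\cU}=p_{2,\cU}=0.5$. The coarse profile $(g_1,g_2)=(10,0)$ yields revenue and welfare $5$, the fine profile $(4,4)$ yields revenue $3$ and welfare $4.5$, and every winner has $g_i\le v_i$. One small slip in your heuristic: the welfare drop needs $2qv_2<v_1p$ (bidder 2's value on $u_1$ below bidder 1's), not $2qv_2<2v_1p$. Your numbers satisfy the correct condition, $4<5$.

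Your skeleton matches the paper's: two bidders, one pooled cluster split into singletons, one designated profile per model. The profiles, however, do very different work. In the paper, the coarse profile is an exact threshold best-response profile: bidder 2 bids its full value $0.68$, and bidder 1 ties it and wins by index. The fine profile $(1,1)$ has additive regrets of only $0.162$ and $0.049$. The instance is also shaped so the fine misallocation is hard to undo under a uniform multiplier. Since $p_1(L)=0.01\ll p_1(H)=0.2$, contesting $L$ would require $g_1\ge 50$ and a large overpayment on the heavy segment $H$. This is what lets the paper attribute the drop to refinement-induced segmentation rather than to the choice of shading levels.

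In your instance, the coarse revenue comes entirely from the unshaded bid $g_1=v_1$ against a zero competing bid. Bidder 1 could move to $g_1=0$, still win by tie-breaking, and gain $5$, its entire payment. If bidder 1 best-responds in the coarse model and bidder 2 does not overbid ($g_2\le v_2$), coarse revenue is at most $v_2q=2$, which is below your fine revenue of $3$, so the revenue comparison reverses. Your fine profile also leaves each bidder a deviation gain of about $1$: bidder 1 can move to $g_1=0$, and bidder 2 can move to $g_2$ just above $2$. The reasonableness check you propose, $g_1=4$ versus $g_1=10$, is vacuous, because $g_1=v_1$ always yields zero surplus in FPA.

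So your route gives a short, transparent verification of the bare existential claim. With unrestricted feasible profiles that claim is nearly trivial: refinement is reflexive, so even $\cM_A=\cM_B$ with two different profiles would satisfy it. The paper's choice of (near-)best-response profiles is what gives its counterexample economic content.
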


Unlike tCPA bidders (who optimally set $\mu = 1$ under FPA without budgets), MAX-CPA bidders strategically shade their bids, and their multipliers can differ across models. This row is not a full equilibrium characterization; it shows that once feasible multiplier profiles are allowed to vary with the information structure, the Jensen monotonicity argument no longer protects either revenue or welfare. Note that with \emph{truly fixed} multipliers across models, revenue would be monotone by the same Jensen argument as \Cref{thm:fpa-revenue}. See \Cref{app:counter-maxcpa-fpa} for a numerical counterexample demonstrating 66\% revenue loss and 0.09\% welfare loss under model refinement, together with additive-regret calculations for the displayed profiles.

\subsubsection{Liquid Welfare Non-Monotonicity}

\begin{theorem}[Liquid Welfare Non-Monotonicity]
\label{thm:liquid-welfare-nonmono}
There exist instances with MAX-CPA bidders having budgets where model refinement decreases liquid welfare:
\begin{gather}
\cM_A \succeq \cM_B \quad \text{but} \notag \\
\text{LiquidWelfare}(\cM_A) < \text{LiquidWelfare}(\cM_B).
\end{gather}
\end{theorem}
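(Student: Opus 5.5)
The plan is to prove the statement by exhibiting one explicit instance: two MAX-CPA bidders, one with a small budget, and a single refinement step that splits one coarse cluster into two. As in \Cref{thm:fpa-maxcpa-nonmono}, outcomes are evaluated in FPA at a designated feasible effective-multiplier profile $g=(g_1,g_2)$, following the dagger convention of \Cref{tab:monotonicity}. To keep the comparison as clean as possible I will use the \emph{same} profile under both models, so the only thing that changes is the information structure.

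\textbf{Mechanism of the counterexample.} Liquid welfare can fall if refinement lets the budget-constrained bidder outbid a high-value competitor on a segment. That bidder's contribution is then capped at $B_1$, while the competitor loses its full uncapped value. For this to produce a strict loss, the winning bid of the capped bidder must be below the competitor's true value on that segment. This requires the competitor to shade ($g_2<v_2$), which is exactly what MAX-CPA bidders do in FPA.

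\textbf{Instance.} Take $\cU=\{u_1,u_2\}$ with weights $1/2$ each, and set
\[
v_1=v_2=1,\qquad p_1(u_1)=1,\quad p_1(u_2)=0,\qquad p_2(u_1)=p_2(u_2)=0.6 .
\]
Give bidder $1$ budget $B_1=0.16$ and bidder $2$ a non-binding budget. The coarse model is $\cM_B=\{\cU\}$, so $p_{1,\cU}=0.5$ and $p_{2,\cU}=0.6$. The fine model $\cM_A$ is the singleton partition.

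Fix the profile $g_1=0.32$, $g_2=0.5$. These are effective multipliers $g_i=\mu_i v_i$ with $\mu_i\le 1$, so every win has cost per conversion at most $v_i$.

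\textbf{Coarse model $\cM_B$.} The bids on the single cluster $\cU$ are:
\begin{itemize}
\item bidder 1: $0.32\cdot 0.5=0.16$;
\item bidder 2: $0.5\cdot 0.6=0.3$.
\end{itemize}
Bidder $2$ wins all of $\cU$. Bidder 1 spends nothing, so its budget is respected. Liquid welfare is $\min(B_2,\,0.6)=0.6$.

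\textbf{Fine model $\cM_A$.} The cluster-by-cluster outcome is:
\begin{itemize}
\item On $\{u_1\}$, bidder 1 bids $0.32$ and bidder 2 bids $0.3$, so bidder $1$ wins.
\item On $\{u_2\}$, bidder 1 bids $0$ and bidder 2 bids $0.3$, so bidder $2$ wins.
\end{itemize}
Bidder $1$'s spend is $\tfrac12\cdot 0.32=0.16\le B_1$, so the budget is satisfied (with equality). Its value is $\tfrac12\cdot 1=0.5$, which is capped at $0.16$. Bidder 2's value is $\tfrac12\cdot 0.6=0.3$. Liquid welfare is therefore $0.16+0.3=0.46<0.6$, a decrease of about $23\%$.

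\textbf{Remaining checks.}
\begin{itemize}
\item Calibration: the fine predictions average back to the coarse ones, $(1+0)/2=0.5$ and $(0.6+0.6)/2=0.6$, so $\cM_A\succeq\cM_B$.
\item Feasibility: both bidders satisfy budget and CPA feasibility in both models. Ties do not arise, so the lowest-index tie-breaking rule of \Cref{def:fpa} plays no role.
\end{itemize}

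\textbf{Main obstacle.} The delicate step is justifying the outcome mapping rather than the arithmetic. The argument depends on bidder $2$ shading to $g_2=0.5<v_2$, and is therefore stated under the designated-profile convention of \Cref{ass:equilibrium}, not as an equilibrium claim. If a stronger statement is wanted, I would next report additive regret for each bidder at this profile, mirroring \Cref{thm:fpa-maxcpa-nonmono}. I would also try to perturb the instance so that bidder $2$'s best response stays below $0.32/0.6$ on $\{u_1\}$, for example by adding a third competitor on $u_2$ that makes uniform shading attractive for bidder $2$.
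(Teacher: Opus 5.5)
Your counterexample is correct, and it takes a genuinely different route from the paper. The paper builds no budgeted instance at all. It reuses the MAX-CPA FPA example of \Cref{thm:fpa-maxcpa-nonmono} with budgets too large to bind, so liquid welfare equals ordinary welfare and the $108.6\to 108.5$ drop carries over. In that argument budgets play no role: the loss comes entirely from the designated multiplier profile changing between models, and crowding out is only described in words. Your instance makes crowding out explicit. The budget binds, ordinary welfare actually rises ($0.6\to 0.8$) while liquid welfare falls ($0.6\to 0.46$), and the profile is held fixed. So the effect is purely informational, and it survives in the fixed-multiplier regime where FPA revenue is protected by Jensen (yours goes $0.3\to 0.31$). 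That isolates the budget cap itself as a source of non-monotonicity, which the paper's argument does not show.

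Your instance also gives more than you claim. The ``main obstacle'' you flag largely disappears, because the allocations are forced in every approximate equilibrium. In the coarse model, bidder 1 could hold the pooled cluster only with $b_1\le 0.16$, and bidder 2 would then overbid for surplus $0.6-b_1\ge 0.44$. Conversely, bidder 2 must keep $b_2>0.16$, or bidder 1 ties and wins affordably. In the fine model, bidder 2 can hold $u_1$ only with $g_2>8/15$, since otherwise bidder 1 ties on $u_1$ at spend $0.3g_2\le 0.16$ and gains at least $0.34$. But $g_2>8/15$ earns bidder 2 less than $0.28$, versus about $0.3$ from shading toward zero and keeping only $u_2$. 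Hence liquid welfare is $0.6$ versus $0.46$ in every $\varepsilon$-equilibrium for small $\varepsilon$. Both sets are non-empty, e.g.\ near-ties $b_2\approx b_1=0.2$ in the coarse model and $g_1=0.6g_2$ with $g_2$ small in the fine one. This is a stronger conclusion than the paper's designated-profile claim, and no third bidder is needed.

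One bookkeeping point: you measure spend, value and $B_1$ with the weights $w_C=1/2$. With the unweighted user sums of \Cref{def:ecm}, use $B_1=0.32$, which gives liquid welfare $1.2\to 0.92$.
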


The counterexample in \Cref{app:counter-maxcpa-fpa} demonstrates this: with sufficiently large (non-binding) budgets, liquid welfare equals standard welfare, so the 0.09\% welfare decrease applies. More generally, liquid welfare non-monotonicity can also arise via a ``crowding out'' effect where a budget-constrained high-value bidder displaces an unconstrained bidder.

\subsection{Discussion}
\label{subsec:discussion}

Among strategic auction mechanisms, our results reveal that monotonicity is rare: only FPA with tCPA bidders without budgets guarantees both revenue and welfare monotonicity, while VCG with MAX-CPA bidders without budgets guarantees welfare monotonicity only. Separately, the centralized LP benchmark guarantees welfare monotonicity with budgets, and for tCPA bidders also guarantees surrogate revenue monotonicity. All remaining auction settings studied exhibit non-monotonicity due to: (1) externality-based VCG payments, (2) budget-induced bid shading, or (3) tCPA-VCG misalignment. For platforms, this implies FPA with tCPA autobidders is the safest strategic auction choice, and model improvements should be validated via A/B testing rather than assumed to improve metrics.


\section{Counterexamples}
\label{sec:counterexamples}

This section summarizes the numerical counterexamples establishing non-monotonicity. Full calculations appear in \Cref{app:counterexamples}.

\paragraph{Representative Counterexample: VCG/SPA Revenue and Welfare Non-Monotonicity.}
Two tCPA bidders with $t_A = 10$, $t_B = 1$. Four auctions (one impression each) with conversion probabilities:
\begin{center}
\begin{small}
\begin{tabular}{lcccc}
\toprule
& Auction 0 & Auction 1 & Auction 2 & Auction 3 \\
\midrule
Bidder A & 0.20 & 0.05 & 0.01 & 0.01 \\
Bidder B & 0.08 & 0.30 & 0.03 & 0.70 \\
\bottomrule
\end{tabular}
\end{small}
\end{center}

Under the \emph{coarse model} (partition $\{\{0,1\}, \{2,3\}\}$), equilibrium multipliers yield: $A$ wins $\{0,1\}$, $B$ wins $\{2,3\}$, with both bidders exactly at their tCPA. Revenue $= 3.23$, Welfare $= t_A \cdot 0.25 + t_B \cdot 0.73 = 3.23$ (where $0.25 = 0.20{+}0.05$ and $0.73 = 0.03{+}0.70$ are total conversions across won auctions).

Under the \emph{fine model} (singletons), equilibrium multipliers shift: $A$ wins only auction 0, $B$ wins $\{1,2,3\}$, again with both at tCPA. Revenue $= 3.03$, Welfare $= t_A \cdot 0.20 + t_B \cdot 1.03 = 3.03$ (where $1.03 = 0.30{+}0.03{+}0.70$).

\textbf{Result}: Both revenue and welfare decrease by 6.2\% despite $\cM_A \succeq \cM_B$. The intuition: refinement reallocates high-value impressions (auction 1) from the high-$t$ bidder to the low-$t$ bidder, reducing value-weighted conversions. Full calculations in \Cref{app:counter-vcg-tcpa}.

\paragraph{Key Insights.} VCG/SPA payments depend on competition, which refinement can reduce. Budgets change utilization patterns, potentially eliminating competition in some segments (though LP-based allocation guarantees monotonicity per \Cref{thm:lp-mono}).


\section{Conclusion}
\label{sec:conclusion}

We introduced a formal framework for studying when improvements in prediction models lead to improvements in platform-level metrics in autobidding systems. Our central finding is striking: \textbf{model monotonicity is rare.}

\paragraph{Summary.} We formalized model improvement through a refinement relation inspired by filtrations in probability theory, and systematically analyzed ECM monotonicity across bidder types, auction formats, and budget constraints. Of all settings studied, only \textbf{three} guarantee monotonicity: (1) tCPA bidders without budgets under FPA with uniform bidding at $\mu = 1$ (revenue and welfare monotonic); (2) MAX-CPA bidders without budgets under VCG (welfare monotonic only---not revenue); (3) tCPA or MAX-CPA bidders with budgets under LP-based fractional allocation (welfare monotonic; for tCPA, surrogate revenue also monotonic).
All remaining settings studied---including VCG/SPA for tCPA bidders and FPA with budgets---exhibit non-monotonicity (verified numerically). For MAX-CPA FPA, we provide a numerical counterexample under designated feasible multiplier profiles demonstrating 66\% revenue loss and 0.09\% welfare loss; this counterexample also establishes liquid welfare non-monotonicity when using non-binding budgets.

\paragraph{Incentive Compatibility Trade-off.} Our positive results reveal a fundamental trade-off between monotonicity and incentive compatibility. FPA with uniform bidding is incentive compatible---bidders choose their multipliers privately without revealing targets or budgets to the platform---but only guarantees monotonicity without budget constraints. LP-based allocation handles budgets and guarantees monotonicity, but requires the platform to know private values ($t_i$, $v_i$, $B_i$) to compute the optimal allocation, making it incentive incompatible. No known mechanism achieves both properties simultaneously with budget constraints.

\paragraph{Implications.} Better models do not always yield better outcomes---platforms should validate improvements through simulation. Budget constraints broadly break monotonicity, and no auction format provides universal guarantees. The interaction between ML models, auctions, and autobidders creates dynamics where component-level improvements may not translate to system-level gains.

\newpage
\section*{Acknowledgements}
The author thanks the anonymous reviewers for their helpful feedback.

\section*{Impact Statement}
This paper presents work whose goal is to advance the understanding of interactions between machine learning models, auction mechanisms, and automated bidding systems in online advertising. While our results are theoretical and do not propose a deployed system, they have potential implications for advertising platform design. More informative prediction models can improve allocation quality, but can also enable finer targeting, alter competitive pressure among advertisers, and interact with budget constraints in ways that affect market concentration and access to impressions. These effects may also raise fairness and privacy concerns when refined prediction models rely on sensitive or highly granular user features. Our results therefore support careful system-level validation of model changes rather than evaluating prediction quality in isolation.

\bibliography{references}
\bibliographystyle{icml2026}

\newpage
\appendix
\onecolumn

\section{Proofs of Main Positive Results}
\label{app:proofs}

This appendix contains complete proofs of the two main positive theorems for the strategic auction settings stated in \Cref{sec:results} (the LP benchmark proof appears in \Cref{app:lp-mono}). Both proofs rely on Jensen's inequality applied to convex functions~\citep{hardy1952inequalities}.

\subsection{Implementation Notes and Additional Remarks}
\label{app:implementation}

\textbf{pCTR/pCVR Implementation.} In practice, platforms may predict CTR and CVR separately: $\pctr_\cM(u) \coloneqq \frac{1}{|C|} \sum_{u' \in C} \mathrm{ctr}(u')$ and $\pcvr_\cM(u) \coloneqq \frac{1}{|C|} \sum_{u' \in C} \mathrm{cvr}(u')$. However, our analysis uses only the cluster-averaged conversion probability $p_{i,C}$. The product of cluster-averaged CTR and CVR does not generally equal $p_{i,C}$.

\textbf{Per-User vs.\ Per-Cluster Notation.} ECM definitions sum over users $u \in \cU$. In proofs, we equivalently write metrics as weighted sums over clusters: $\mathrm{Revenue}(\cM) = \sum_{C \in \cP} w_C \cdot \mathrm{Rev}(C)$ where $w_C = |C|/|\cU|$. Since all users in a cluster receive identical bids and allocations, per-user sums reduce to per-cluster weighted sums.

\textbf{Tie-Breaking.} Ties are broken by advertiser index (lowest index wins). This deterministic rule ensures well-defined allocations and is standard in mechanism design~\citep{krishna2009auction}.

\subsection{Proof of \Cref{thm:fpa-revenue} (FPA Revenue Monotonicity for tCPA)}

\begin{proof}
Consider tCPA bidders without budget constraints using uniform bidding with optimal multiplier $\mu_i = 1$. Each bidder $i$ bids:
\begin{equation}
b_i(C) = t_i \cdot p_{i,C}
\end{equation}
where $t_i$ is the target CPA and $p_{i,C}$ is the predicted conversion probability in cluster $C$.

In a first-price auction, the winner pays their bid, so revenue from cluster $C$ is:
\begin{equation}
\text{Rev}(C) = \max_{i \in \cA} b_i(C) = \max_{i \in \cA} t_i \cdot p_{i,C}.
\end{equation}

Total expected revenue under model $\cM$ with partition $\cP$ is:
\begin{equation}
\text{Rev}(\cM) = \sum_{C \in \cP} w_C \cdot \max_{i \in \cA} t_i \cdot p_{i,C}.
\end{equation}

Now suppose $\cM_A \succeq \cM_B$, meaning the partition $\cP_A$ refines $\cP_B$. For each cluster $C^B \in \cP_B$, there exist sub-clusters $\{C^A_1, \ldots, C^A_k\} \subseteq \cP_A$ that partition $C^B$. By the definition of refinement and calibration:
\begin{align}
\sum_{j=1}^{k} w_{C^A_j} &= w_{C^B}, \\
\sum_{j=1}^{k} w_{C^A_j} \cdot p_{i, C^A_j} &= w_{C^B} \cdot p_{i, C^B} \quad \forall i \in \cA.
\end{align}

Define $\lambda_j = w_{C^A_j} / w_{C^B}$, so $\sum_j \lambda_j = 1$ and $\sum_j \lambda_j p_{i, C^A_j} = p_{i, C^B}$.

Define the function $f: \R^n \to \R$ by $f(p_1, \ldots, p_n) = \max_{i} t_i \cdot p_i$. This function is convex as the pointwise maximum of linear functions.

By Jensen's inequality applied to the convex function $f$:
\begin{align}
\sum_{j=1}^{k} \lambda_j \cdot f(p_{1, C^A_j}, \ldots, p_{n, C^A_j}) &\geq f\left(\sum_{j=1}^{k} \lambda_j p_{1, C^A_j}, \ldots, \sum_{j=1}^{k} \lambda_j p_{n, C^A_j}\right) \\
&= f(p_{1, C^B}, \ldots, p_{n, C^B}) = \max_{i} t_i \cdot p_{i, C^B}.
\end{align}

Therefore:
\begin{equation}
\sum_{j=1}^{k} w_{C^A_j} \cdot \max_{i} t_i \cdot p_{i, C^A_j} \geq w_{C^B} \cdot \max_{i} t_i \cdot p_{i, C^B}.
\end{equation}

Summing over all clusters $C^B \in \cP_B$:
\begin{equation}
\text{Rev}(\cM_A) \geq \text{Rev}(\cM_B).
\end{equation}
\end{proof}

\subsection{Proof of \Cref{thm:vcg-welfare} (VCG Welfare Monotonicity for MAX-CPA)}

\begin{proof}
Under VCG, the mechanism allocates to maximize welfare. In each cluster $C$, the allocation is:
\begin{equation}
i^*(C) = \arg\max_{i \in \cA} v_i \cdot p_{i,C}
\end{equation}
where $v_i$ is bidder $i$'s value per conversion. The welfare from cluster $C$ is:
\begin{equation}
W(C) = \max_{i \in \cA} v_i \cdot p_{i,C}.
\end{equation}

Total welfare under model $\cM$ with partition $\cP$ is:
\begin{equation}
\text{Welfare}(\cM) = \sum_{C \in \cP} w_C \cdot \max_{i} v_i \cdot p_{i,C}.
\end{equation}

This expression has identical structure to FPA revenue in \Cref{thm:fpa-revenue}, with $v_i$ replacing $t_i$. The function $g(p) = \max_i v_i \cdot p_i$ is convex (pointwise maximum of linear functions).

By the identical Jensen's inequality argument as in the proof of \Cref{thm:fpa-revenue}:
\begin{equation}
\text{Welfare}(\cM_A) \geq \text{Welfare}(\cM_B)
\end{equation}
whenever $\cM_A \succeq \cM_B$.
\end{proof}

\section{Counterexample Details}
\label{app:counterexamples}

This appendix provides detailed calculations for the counterexamples in \Cref{sec:counterexamples}.

\subsection{VCG/SPA Revenue and Welfare Non-Monotonicity for tCPA}
\label{app:counter-vcg-tcpa}

For single-item auctions, VCG and SPA are equivalent mechanisms (see \Cref{rem:spa-vcg-equiv}). We provide a counterexample demonstrating that \emph{both} revenue and welfare can decrease under refinement.

\textbf{Setup:} Four auctions $u \in \{0,1,2,3\}$ (one impression each). Two tCPA bidders $A, B$ with targets $t_A = 10$, $t_B = 1$. Uniform tCPA bidding: each bidder chooses a scalar goal $g_j$ and bids $b_{u,j} = g_j \cdot \hat{p}_{u,j}$ where $\hat{p}$ is the platform's prediction.

\textbf{True conversion probabilities:}
\begin{center}
\begin{tabular}{lcccc}
\toprule
& Auction 0 & Auction 1 & Auction 2 & Auction 3 \\
\midrule
Bidder $A$ & 0.20 & 0.05 & 0.01 & 0.01 \\
Bidder $B$ & 0.08 & 0.30 & 0.03 & 0.70 \\
\bottomrule
\end{tabular}
\end{center}

\textbf{Model B (Coarse):} Partition $\{\{0,1\}, \{2,3\}\}$. Predictions are cluster averages:
\begin{align}
\text{Cluster } \{0,1\}: \quad &\hat{p}_A = (0.20+0.05)/2 = 0.125, \quad \hat{p}_B = (0.08+0.30)/2 = 0.19 \\
\text{Cluster } \{2,3\}: \quad &\hat{p}_A = (0.01+0.01)/2 = 0.01, \quad \hat{p}_B = (0.03+0.70)/2 = 0.365
\end{align}

\textbf{Equilibrium (both at tCPA):} Bidders choose uniform multipliers to maximize expected conversions subject to expected CPA $\leq t_i$, taking others' multipliers as given. In SPA, any multipliers in an interval preserve the allocation; we exhibit an equilibrium with both CPA constraints binding to make the welfare and revenue calculations transparent. The following multipliers constitute such an equilibrium: $A$ wins $\{0,1\}$, $B$ wins $\{2,3\}$:
\begin{align}
g_B &= \frac{t_A \cdot (p_{0,A} + p_{1,A})}{\hat{p}_{0,B} + \hat{p}_{1,B}} = \frac{10 \times 0.25}{0.38} = 6.579 \\
g_A &= \frac{t_B \cdot (p_{2,B} + p_{3,B})}{\hat{p}_{2,A} + \hat{p}_{3,A}} = \frac{1 \times 0.73}{0.02} = 36.5
\end{align}

\textbf{Best-response verification:} Each bidder's multiplier is set so their CPA constraint binds exactly. Increasing the multiplier would cause the bidder to win additional auctions where their cost-per-conversion exceeds their target, violating feasibility. For each bidder, the best-response \emph{set} includes all multipliers that preserve the same winning set while satisfying the CPA constraint; among these, we select the CPA-binding multiplier as a canonical equilibrium. Other equilibria with the same allocation exist within these intervals.

\textbf{Verification:} Bids in $\{0,1\}$: $b_A = 36.5 \times 0.125 = 4.56$, $b_B = 6.579 \times 0.19 = 1.25$ ($A$ wins). Bids in $\{2,3\}$: $b_A = 0.365$, $b_B = 2.40$ ($B$ wins). Both CPAs bind: $A$ pays $2.50$ for $0.25$ conversions (CPA $= 10 = t_A$); $B$ pays $0.73$ for $0.73$ conversions (CPA $= 1 = t_B$).
\begin{align}
\text{Rev}(\cM_B) &= 2.50 + 0.73 = 3.23 \\
\text{Welfare}(\cM_B) &= t_A \cdot 0.25 + t_B \cdot 0.73 = 2.50 + 0.73 = 3.23
\end{align}

\textbf{Model A (Fine):} Partition into singletons $\{\{0\}, \{1\}, \{2\}, \{3\}\}$, so $\hat{p}_{i,j} = p_{i,j}$.

\textbf{Equilibrium (both at tCPA):} Allocation: $A$ wins only $\{0\}$, $B$ wins $\{1,2,3\}$. From CPA constraints:
\begin{align}
g_B &= \frac{t_A \cdot p_{0,A}}{\hat{p}_{0,B}} = \frac{10 \times 0.20}{0.08} = 25 \\
g_A &= \frac{t_B \cdot (p_{1,B} + p_{2,B} + p_{3,B})}{\hat{p}_{1,A} + \hat{p}_{2,A} + \hat{p}_{3,A}} = \frac{1 \times 1.03}{0.07} = 14.71
\end{align}

\textbf{Best-response verification:} As in the coarse model, each bidder's CPA constraint binds exactly. Increasing either multiplier would cause that bidder to win additional auctions violating their CPA target. Best-response sets are intervals preserving the winning set; we select CPA-binding multipliers as canonical. The same allocation persists across all equilibria in these intervals.

\textbf{Verification:} Auction 0: $b_A = 2.94$, $b_B = 2.0$ ($A$ wins). Auctions 1,2,3: $B$ wins (e.g., auction 1: $b_A = 0.74$, $b_B = 7.5$). CPAs bind: $A$ pays $2.0$ for $0.20$ conversions (CPA $= 10$); $B$ pays $1.03$ for $1.03$ conversions (CPA $= 1$).
\begin{align}
\text{Rev}(\cM_A) &= 2.0 + 1.03 = 3.03 \\
\text{Welfare}(\cM_A) &= t_A \cdot 0.20 + t_B \cdot 1.03 = 2.0 + 1.03 = 3.03
\end{align}

\textbf{Non-monotonicity:} Both revenue and welfare decrease: $3.23 \to 3.03$ (6.2\% loss). Both bidders win $\geq 1$ auction and are exactly at tCPA in both conditions. Welfare decreases because refinement reallocates auction 1 (where $A$ has high value $t_A p_{1,A} = 0.5$ vs $B$'s value $t_B p_{1,B} = 0.3$) from high-$t$ bidder $A$ to low-$t$ bidder $B$.

\subsection{FPA Revenue Non-Monotonicity with Budgets}
\label{app:counter-fpa-budget}

For the FPA counterexample with budget constraints (\Cref{thm:fpa-budget-nonmono}):

\textbf{Setup:} Four auctions, two advertisers. First-price auction with uniform bidding: $b_i(u) = \alpha_i \cdot \hat{p}_{i,u}$.
\begin{itemize}
    \item Advertiser 1: Budget $B_1 = 3.185$, target $t_1 = 8.674$ (tCPA non-binding; budget binds under these multipliers).
    \item Advertiser 2: Budget $B_2 = \infty$, target $t_2 = 1.662$.
\end{itemize}

\textbf{Budget model:} The budget $B_i$ is a deterministic ex ante constraint on total expected payment across all auctions: $\sum_{u \in \cU} x_i(u) \cdot b_i(u) \leq B_i$, where $x_i(u) \in \{0,1\}$ indicates whether advertiser $i$ wins auction $u$. No pacing or randomization is used; bidders commit to a single multiplier $\alpha_i$ applied uniformly. The budget-binding multiplier is the maximum $\alpha_i$ such that expected spend equals $B_i$ given the allocation induced by $(\alpha_1, \alpha_2)$.

\textbf{True conversion probabilities:}
\begin{center}
\begin{tabular}{lcc}
\toprule
Auction & $q_1$ & $q_2$ \\
\midrule
1 & 0.516 & 0.559 \\
2 & 0.027 & 0.850 \\
3 & 0.560 & 0.617 \\
4 & 0.555 & 0.330 \\
\bottomrule
\end{tabular}
\end{center}

\textbf{Model B (Coarse):} Partition $\{1,2\}, \{3,4\}$.
\begin{align}
\text{Predictions:} \quad & \hat{p}_{1,\{1,2\}} = 0.2715, \quad \hat{p}_{1,\{3,4\}} = 0.5575 \\
& \hat{p}_{2,\{1,2\}} = 0.7045, \quad \hat{p}_{2,\{3,4\}} = 0.4735
\end{align}
Candidate multiplier profile: $\alpha_1 = 2.8565$ (budget binding), $\alpha_2 = t_2 = 1.662$.

Advertiser 1's bids: $b_1(\{1,2\}) = 0.7755$, $b_1(\{3,4\}) = 1.5925$.\\
Advertiser 2's bids: $b_2(\{1,2\}) = 1.1709$, $b_2(\{3,4\}) = 0.7870$.

Winners: Advertiser 2 wins auctions 1,2; Advertiser 1 wins auctions 3,4.\\
Revenue: $\text{Rev}(\cM_B) = 2 \times 1.1709 + 2 \times 1.5925 = 5.5268$.

\textbf{Model A (Fine):} Singleton partition (predictions equal true probabilities).

Candidate multiplier profile: $\alpha_1 = 1.9528$ (budget binding), $\alpha_2 = t_2 = 1.662$.

Advertiser 1's bids: $b_1(1) = 1.0076$, $b_1(2) = 0.0527$, $b_1(3) = 1.0936$, $b_1(4) = 1.0838$.\\
Advertiser 2's bids: $b_2(1) = 0.9291$, $b_2(2) = 1.4127$, $b_2(3) = 1.0255$, $b_2(4) = 0.5485$.

Winners: Advertiser 1 wins auctions 1,3,4; Advertiser 2 wins auction 2.\\
Revenue: $\text{Rev}(\cM_A) = 1.0076 + 1.4127 + 1.0936 + 1.0838 = 4.5977$.

\textbf{Profile justification and best-response analysis:} In both models, Advertiser 1's budget binds, determining their multiplier by budget exhaustion. In Model B, this yields $\alpha_1 = 2.8565$ with spend $= B_1 = 3.185$. Given this $\alpha_1$, Advertiser 2's best-response \emph{set} is an interval: all $\alpha_2 \in (1.10, \infty)$ preserve winning $\{1,2\}$ (threshold: $\alpha_2 > b_1(\{1,2\})/\hat{p}_{2,\{1,2\}} \approx 1.10$). We define a canonical equilibrium selection that picks $\alpha_2 = t_2 = 1.662$---the conversion-maximizing choice that exactly binds the tCPA constraint. For Advertiser 1, we define the canonical selection as the maximal multiplier that exhausts their budget ($\alpha_1 = 2.8565$). The profile $(2.8565, 1.662)$ is one Nash equilibrium selected from the continuum of equilibria within the best-response intervals; the same selection convention applies to Model A with $\alpha_1 = 1.9528$.

\textbf{Result:} Revenue decreases by $(5.5268 - 4.5977)/5.5268 = 16.8\%$.

\textbf{Mechanism:} Under the coarse model, Advertiser 1's low predicted probability in cluster $\{1,2\}$ means they lose these auctions despite high $\alpha_1$. Under the fine model, Advertiser 1 can win auction 1 (high true probability) with a lower $\alpha_1$. The budget-constrained advertiser wins more auctions but pays less per auction, while the unconstrained advertiser loses a high-payment auction (auction 1). Net effect: total revenue decreases.

\textbf{Liquid welfare calculation:} Liquid welfare is $\sum_i \min(B_i, \text{welfare}_i)$ where $\text{welfare}_i = \sum_{u \in W_i} t_i \cdot p_{i,u}$ (Advertiser $i$'s total value from won impressions $W_i$).

\emph{Model B:} Advertiser 1 wins $\{3,4\}$: welfare $= 8.674 \times (0.560 + 0.555) = 9.67$, capped at $\min(3.185, 9.67) = 3.185$. Advertiser 2 wins $\{1,2\}$: welfare $= 1.662 \times (0.559 + 0.850) = 2.34$. Total: $3.185 + 2.34 = 5.53$.

\emph{Model A:} Advertiser 1 wins $\{1,3,4\}$: welfare $= 8.674 \times (0.516 + 0.560 + 0.555) = 14.15$, capped at $3.185$. Advertiser 2 wins $\{2\}$: welfare $= 1.662 \times 0.850 = 1.41$. Total: $3.185 + 1.41 = 4.60$.

Liquid welfare decreases by $(5.53 - 4.60)/5.53 = 16.8\%$, matching revenue. The decrease arises because model refinement lets the budget-constrained advertiser win auction 1 (displacing the unconstrained advertiser), but the budget cap prevents extracting additional value while the unconstrained advertiser loses conversions.

\subsection{VCG Revenue Non-Monotonicity for MAX-CPA}
\label{app:counter-vcg-maxcpa}

For the VCG counterexample with MAX-CPA bidders (\Cref{thm:vcg-maxcpa-revenue-nonmono}):

\textbf{Setup:} $v_1 = 10$, $v_2 = 8$. Segments with $w_1 = w_2 = 0.5$:
\begin{center}
\begin{tabular}{lcc}
\toprule
& $S_1$ & $S_2$ \\
\midrule
Bidder 1 & 0.2 & 0.6 \\
Bidder 2 & 0.5 & 0.1 \\
\bottomrule
\end{tabular}
\end{center}

\textbf{Model B (Coarse):}
\begin{align}
v_1 p_1 &= 10 \times 0.4 = 4.0, \quad v_2 p_2 = 8 \times 0.3 = 2.4 \\
\text{VCG payment} &= 2.4, \quad \text{Welfare} = 4.0
\end{align}

\textbf{Model A (Fine):}
\begin{align}
S_1: \quad v_1 p_1 &= 2.0, \quad v_2 p_2 = 4.0 \quad \Rightarrow \text{Bidder 2 wins, payment} = 2.0 \\
S_2: \quad v_1 p_1 &= 6.0, \quad v_2 p_2 = 0.8 \quad \Rightarrow \text{Bidder 1 wins, payment} = 0.8
\end{align}

\textbf{Results:}
\begin{align}
\text{Rev}(\cM_A) &= 0.5 \times 2.0 + 0.5 \times 0.8 = 1.4 < 2.4 = \text{Rev}(\cM_B) \\
\text{Welfare}(\cM_A) &= 0.5 \times 4.0 + 0.5 \times 6.0 = 5.0 > 4.0 = \text{Welfare}(\cM_B)
\end{align}

This confirms that VCG welfare is monotonic for MAX-CPA (as proven in \Cref{thm:vcg-welfare}), but revenue is non-monotonic. The revenue decrease is 41.7\%.

\subsection{FPA Non-Monotonicity for MAX-CPA}
\label{app:counter-maxcpa-fpa}

We provide a numerical counterexample demonstrating that FPA with MAX-CPA bidders can exhibit non-monotonicity in both revenue and welfare under model-dependent feasible multiplier profiles (\Cref{thm:fpa-maxcpa-nonmono}). This is not a full equilibrium characterization; we report additive-regret calculations below to show that the profiles are near best responses in this instance.

\textbf{Setup:} Two impression types $H$ (high-value) and $L$ (low-value) with probabilities $\Pr(H) = 0.9$ and $\Pr(L) = 0.1$. Two MAX-CPA bidders:

\begin{center}
\begin{tabular}{lccc}
\toprule
Advertiser & Value $v_i$ & $p_i(H)$ & $p_i(L)$ \\
\midrule
1 & 600 & 0.2 & 0.01 \\
2 & 10 & 0.02 & 0.5 \\
\bottomrule
\end{tabular}
\end{center}

\textbf{Model B (Coarse):} The coarse model pools both impression types. Calibrated predictions:
\begin{align}
\hat{p}_1 &= 0.9 \times 0.2 + 0.1 \times 0.01 = 0.181 \\
\hat{p}_2 &= 0.9 \times 0.02 + 0.1 \times 0.5 = 0.068
\end{align}

Expected values per impression: $v_1 \hat{p}_1 = 108.6$, $v_2 \hat{p}_2 = 0.68$.

Under uniform bidding, consider multipliers $(g_1, g_2) = (0.68/0.181, 10) \approx (3.7569, 10)$. Bidder 1 bids $b_1 = (0.68/0.181) \times 0.181 = 0.68$ and Bidder 2 bids $b_2 = 10 \times 0.068 = 0.68$. With ties broken by lowest index, Bidder 1 wins all impressions.

\textbf{Revenue:} $\text{Rev}(\cM_B) = 0.68$.

\textbf{Welfare:} Bidder 1 captures all value: $\text{Welfare}(\cM_B) = v_1 \hat{p}_1 = 108.6$.

\textbf{Model A (Fine):} The fine model distinguishes $H$ and $L$ impressions.

In segment $H$: $v_1 p_1(H) = 600 \times 0.2 = 120$, $v_2 p_2(H) = 10 \times 0.02 = 0.2$.\\
In segment $L$: $v_1 p_1(L) = 600 \times 0.01 = 6$, $v_2 p_2(L) = 10 \times 0.5 = 5$.

Under the fine model, consider the multiplier profile $(g_1, g_2) = (1, 1)$.

\textbf{Profile justification:} With $(g_1, g_2) = (1, 1)$: Bidder 1's bid in $H$ is $0.2 > 0.02$ (Bidder 2's bid), so Bidder 1 wins $H$; Bidder 2's bid in $L$ is $0.5 > 0.01$ (Bidder 1's bid), so Bidder 2 wins $L$. Other multiplier profiles also preserve this allocation whenever the ratio satisfies $0.1 < g_1/g_2 < 50$. As stated in \Cref{ass:equilibrium}, MAX-CPA FPA counterexamples use designated feasible multipliers without full equilibrium claims; non-monotonicity arises because different information structures can induce different shading profiles.

With these multipliers:

Segment $H$: $b_1(H) = 1 \times 0.2 = 0.2$, $b_2(H) = 1 \times 0.02 = 0.02$. Bidder 1 wins.\\
Segment $L$: $b_1(L) = 1 \times 0.01 = 0.01$, $b_2(L) = 1 \times 0.5 = 0.5$. Bidder 2 wins.

\textbf{Revenue:}
\begin{equation}
\text{Rev}(\cM_A) = 0.9 \times 0.2 + 0.1 \times 0.5 = 0.18 + 0.05 = 0.23
\end{equation}

\textbf{Welfare:}
\begin{equation}
\text{Welfare}(\cM_A) = 0.9 \times 120 + 0.1 \times 5 = 108 + 0.5 = 108.5
\end{equation}

\textbf{Results:}
\begin{align}
\text{Revenue decrease:} \quad & \frac{0.68 - 0.23}{0.68} = 66.2\% \\
\text{Welfare decrease:} \quad & \frac{108.6 - 108.5}{108.6} = 0.09\%
\end{align}

\textbf{Additive-regret calculation.} In the coarse model, $(g_1, g_2) = (0.68/0.181,10)$ is a threshold best-response profile under the stated tie-breaking rule: Bidder 1 wins all impressions at the lowest bid that ties Bidder 2, and Bidder 2 would obtain non-positive surplus by outbidding. In the fine model, $(g_1,g_2)=(1,1)$ is not an exact Nash equilibrium. However, its additive best-response regrets are small in the scale of the instance. Bidder 1's utility is $0.9(120-0.2)=107.82$; by lowering $g_1$ to just above $0.1$, Bidder 1 still wins $H$ and obtains utility approaching $0.9(120-0.02)=107.982$, a gain of at most $0.162$. Bidder 2's utility is $0.1(5-0.5)=0.45$; by lowering $g_2$ to just above $0.02$, Bidder 2 still wins $L$ and obtains utility approaching $0.1(5-0.01)=0.499$, a gain of at most $0.049$. Thus the fine profile is an approximate best-response profile, not an exact equilibrium.

\textbf{Mechanism:} The key insight is that model refinement enables \emph{market segmentation} that reduces competitive pressure. Under the coarse model, Bidder 1 faces competition across all impressions and must bid at the pooled threshold to win. Under the fine model, each segment becomes a near-monopoly in bid space: Bidder 1 dominates $H$ (bid $0.2$ vs.\ $0.02$) and Bidder 2 dominates $L$ (bid $0.5$ vs.\ $0.01$). This segmentation allows both bidders to shade their bids more ($g_1 = g_2 = 1$), dramatically reducing payments. The welfare loss is minimal because the fine model still allocates efficiently within each segment; the small loss arises from Bidder 2 winning segment $L$ where Bidder 1 has slightly higher value (6 vs.\ 5).

\subsection{LP Benchmark Monotonicity Proof}
\label{app:lp-mono}

This appendix provides the complete proof of \Cref{thm:lp-mono}, showing that a centralized LP allocation-and-payment benchmark guarantees revenue and welfare monotonicity under model refinement. This is not an equilibrium claim about strategic autobidders; the LP directly enforces the relevant feasibility constraints.

\textbf{LP for tCPA Bidders:} For tCPA bidders, the LP maximizes total welfare subject to a surrogate-payment budget:
\begin{align}
\max_{x_{i,C} \geq 0} \quad & \sum_{i \in \cA} \sum_{C \in \cP} w_C \cdot x_{i,C} \cdot t_i \cdot p_{i,C} && \text{(maximize total welfare)} \\
\text{s.t.} \quad & \sum_{C \in \cP} w_C \cdot x_{i,C} \cdot t_i \cdot p_{i,C} \leq B_i && \forall i \in \cA \text{ (budget constraint: spend} \leq B_i\text{)} \\
& \sum_{i \in \cA} x_{i,C} \leq 1 && \forall C \in \cP \text{ (supply constraint)} \\
& x_{i,C} \in [0,1] && \forall i \in \cA, C \in \cP
\end{align}
Here, the LP's surrogate payment for bidder $i$ in cluster $C$ equals $t_i \cdot p_{i,C}$ per impression, matching truthful tCPA first-price payments at $\mu_i=1$.

\textbf{LP for MAX-CPA Bidders:} For MAX-CPA bidders, the LP maximizes welfare subject to a value-budget/liquid-welfare feasibility constraint:
\begin{align}
\max_{x_{i,C} \geq 0} \quad & \sum_{i \in \cA} \sum_{C \in \cP} w_C \cdot x_{i,C} \cdot v_i \cdot p_{i,C} && \text{(maximize total welfare)} \\
\text{s.t.} \quad & \sum_{C \in \cP} w_C \cdot x_{i,C} \cdot v_i \cdot p_{i,C} \leq B_i && \forall i \in \cA \text{ (budget constraint)} \\
& \sum_{i \in \cA} x_{i,C} \leq 1 && \forall C \in \cP \text{ (supply constraint)} \\
& x_{i,C} \in [0,1] && \forall i \in \cA, C \in \cP
\end{align}

\textbf{Remark (Budget semantics):} The LP result should be read as a centralized allocation-and-payment benchmark. It explicitly enforces feasibility constraints of the calibrated linear form $\sum_C w_C x_{i,C} a_i p_{i,C} \le B_i$, where $a_i$ is fixed for bidder $i$. For tCPA bidders at $\mu_i=1$, the surrogate payment $t_i p_{i,C}$ coincides with value per impression. For MAX-CPA bidders, the LP welfare result uses the analogous value-budget/liquid-welfare constraint $v_i p_{i,C}$. Strategic auction spend budgets, where actual payments depend on the auction format and other bidders' bids, are handled separately in the FPA and VCG rows of \Cref{tab:monotonicity}.

\begin{proof}[Proof of \Cref{thm:lp-mono}]
Let $x^B = \{x^B_{i,C_B}\}$ be an optimal solution for the coarse model $\cM_B$.

Construct a feasible solution $x^A$ for the refined model $\cM_A$ by copying each coarse allocation fraction to every sub-cluster: for each coarse cluster $C_B \in \cP_B$ with refined sub-clusters $\{C^A_1, \ldots, C^A_k\}$, define
\begin{equation}
x^A_{i,C^A_j} := x^B_{i,C_B} \quad \text{for all } i \in \cA \text{ and } j = 1, \ldots, k.
\end{equation}

\textbf{(1) Supply feasibility.} For any refined cluster $C^A_j \subseteq C_B$:
\begin{equation}
\sum_i x^A_{i,C^A_j} = \sum_i x^B_{i,C_B} \leq 1,
\end{equation}
so all refined supply constraints hold.

\textbf{(2) Budget feasibility.} For any advertiser $i$:
\begin{align}
\sum_{C \in \cP_A} w_C \cdot x^A_{i,C} \cdot t_i \cdot p_{i,C}
&= \sum_{C_B \in \cP_B} \sum_{j=1}^k w_{C^A_j} \cdot x^B_{i,C_B} \cdot t_i \cdot p_{i,C^A_j} \\
&= \sum_{C_B \in \cP_B} x^B_{i,C_B} \cdot t_i \cdot \left( \sum_{j=1}^k w_{C^A_j} \cdot p_{i,C^A_j} \right).
\end{align}
By calibration preservation, $\sum_{j=1}^k w_{C^A_j} \cdot p_{i,C^A_j} = w_{C_B} \cdot p_{i,C_B}$. Thus:
\begin{equation}
= \sum_{C_B \in \cP_B} x^B_{i,C_B} \cdot t_i \cdot w_{C_B} \cdot p_{i,C_B} = \sum_{C_B \in \cP_B} w_{C_B} \cdot x^B_{i,C_B} \cdot t_i \cdot p_{i,C_B} \leq B_i.
\end{equation}
All budget constraints hold. (The argument is identical for MAX-CPA with $v_i$ replacing $t_i$.)

\textbf{(3) Objective preservation.} The same calculation shows the refined solution achieves exactly the same objective value:
\begin{equation}
\sum_{C \in \cP_A} \sum_i w_C \cdot x^A_{i,C} \cdot t_i \cdot p_{i,C} = \sum_{C_B \in \cP_B} \sum_i w_{C_B} \cdot x^B_{i,C_B} \cdot t_i \cdot p_{i,C_B}.
\end{equation}

Since $x^A$ is feasible for $\cM_A$, the refined optimum cannot be smaller.

\textbf{tCPA case:} With objective $\sum w_C x_{i,C} t_i p_{i,C}$:
\begin{equation}
\text{Rev}_{\text{LP}}(\cM_A) \geq \text{Rev}_{\text{LP}}(\cM_B).
\end{equation}

\textbf{MAX-CPA case:} With objective $\sum w_C x_{i,C} v_i p_{i,C}$ (welfare):
\begin{equation}
\text{Welfare}_{\text{LP}}(\cM_A) \geq \text{Welfare}_{\text{LP}}(\cM_B).
\end{equation}
The algebra is identical in both cases, substituting $v_i$ for $t_i$.
\end{proof}

\subsection{Notes on Liquid Welfare}
\label{app:counter-notes}

\textbf{Liquid Welfare:} The counterexample in \Cref{app:counter-maxcpa-fpa} establishes liquid welfare non-monotonicity: with sufficiently large (non-binding) budgets, liquid welfare equals standard welfare, so the 0.09\% welfare decrease applies directly. More generally, non-monotonicity can also arise via a ``crowding out'' effect where a budget-constrained high-value bidder displaces an unconstrained bidder, and the constrained bidder's capped contribution is less than the displaced bidder's full contribution.

\section{Open Problems}
\label{app:open-problems}

Our work suggests several directions for future research:
\begin{enumerate}
    \item \emph{Approximate monotonicity}---can we bound the magnitude of monotonicity violations?
    \item \emph{Incentive-compatible monotonic mechanisms}---do mechanisms exist that achieve both properties?
    \item \emph{Calibration and misspecification}---how do the positive and negative results change when prediction models are imperfectly calibrated or only approximately refine one another?
    \item \emph{Multi-slot auctions}---which monotonicity results extend to multi-slot formats such as generalized second-price or generalized first-price auctions?
    \item \emph{Equilibrium selection}---can stronger equilibrium-based characterizations be obtained for first-price auctions with MAX-CPA bidders, where pure equilibria and selection can be delicate?
    \item \emph{Empirical prevalence}---how often do the non-monotonic effects identified by the counterexamples arise in simulations or deployed advertising systems?
    \item \emph{Learning dynamics}---how do model changes affect autobidder learning and convergence?
    \item \emph{Heterogeneous populations}---what happens with mixed tCPA/MAX-CPA bidder populations?
\end{enumerate}

\end{document}